\documentclass{article}

\usepackage{amssymb}
\usepackage{amsmath}
\usepackage{scalerel}

\newtheorem{definition}{Definition}
\newtheorem{theorem}{Theorem}
\newtheorem{lemma}{Lemma}

\newenvironment{proof}{\noindent{\sf Proof.}}{\hfill $\boxtimes$\linebreak}
\newcommand{\qed}{\hfill $\boxtimes$\linebreak}
\bibliographystyle{plain}
\renewcommand{\phi}{\varphi}
\renewcommand{\epsilon}{\varepsilon}

\title{Equilibria Interchangeability in Cellular Games}

\author{Pavel Naumov \and Margaret Protzman}

\date{}

\begin{document}

\maketitle

\begin{center}
\vspace{-3mm}
Department of Mathematics and Computer Science\\
McDaniel College, Westminster, Maryland, USA\\ \vspace{3mm}
{\sf \{pnaumov,mmp001\}@mcdaniel.edu}
\vspace{3mm}
\end{center}

\begin{abstract}
The notion of interchangeability has been introduced by John Nash in one of his original papers on equilibria. This paper studies properties of Nash equilibria interchangeability in cellular games that model behavior of infinite chain of homogeneous economic agents. The paper shows that there are games in which strategy of any given player is interchangeable with strategies of players in an arbitrary large neighborhood of the given player, but is not interchangeable with the strategy of a remote player outside of the neighborhood. The main technical result is a sound and complete logical system describing universal properties of  interchangeability common to all cellular games.
\end{abstract}

\section{Introduction}

\paragraph{Cellular Games.}

An one-dimensional cellular automaton is an infinite row of cells that transition from one state to another under certain rules. The rules are assumed to be identical for all cells. Usually, rules are chosen in such a way that the next state of each cell is determined by the current states of the cell itself and its two neighboring cells.

Harjes and Naumov~\cite{hn13lori} introduced an object similar to cellular automaton that they called {\em cellular game}. They proposed to view each cell as a player, whose pay-off function depends on the strategy of the cell itself and the strategies of its two neighbors. The cellular games are {\em homogeneous} in the sense that all players of a given game have the same pay-off function. Such games can model rational behavior of linearly-spaced homogeneous agents. Linearly-spaced economies have been studied by economists before~\cite{s71jet}.

Consider an example of a cellular game that we call $G_1$. In this game each player has only three strategies. We identify these strategies with congruence classes $[0]$, $[1]$, and $[2]$ of $\mathbb Z_3$. Each player is rewarded for either matching the strategy of her left neighbor or choosing strategy one more (in $\mathbb Z_3$) than the strategy of the left neighbor. An example of a Nash equilibrium in this game is shown on Figure~\ref{introG1}.
\begin{figure}[ht]
\begin{center}
\vspace{3mm}
\scalebox{.7}{\includegraphics{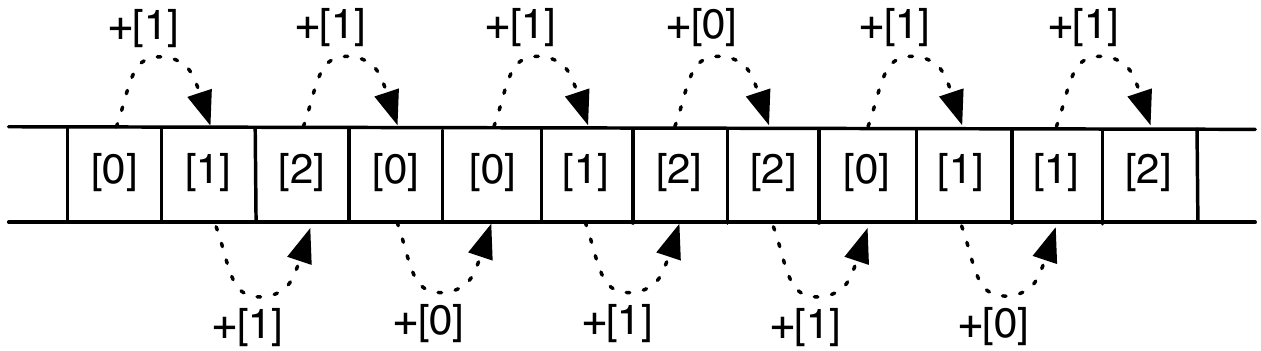}}
\vspace{0mm}
\caption{A Nash equilibrium of game $G_1$.}\label{introG1}
\vspace{0cm}
\end{center}
\vspace{0cm}
\end{figure}

\paragraph{Interchangeability.}

The notion of interchangeability goes back to one of Nash's original papers~\cite{n51am} on equilibria in strategic games. Interchangeability is easiest to define in a two-player game: players in such a game are interchangeable if for any two equilibria $\langle a_1,b_1\rangle$ and $\langle a_2,b_2\rangle$, strategy profiles $\langle a_1,b_2\rangle$ and $\langle a_2,b_1\rangle$ are also equilibria. Players in any two-player {\em zero-sum} game are interchangeable~\cite{n51am}.

Consider now a multiplayer game with set of players $P$. We say that players $p\in P$ and $q\in P$ are interchangeable if for any two equilibria $\langle e'_i\rangle_{i\in P}$ and $\langle e''_i\rangle_{i\in P}$ of the game, there is equilibrium $\langle e_i\rangle_{i\in P}$ of the same game such that $e_p=e'_p$ and $e_q=e''_q$. We denote this by $p\parallel q$.  For example, it is easy to see that for the game described in the previous section, players $p$ and $q$ are interchangeable if there are not adjacent. In other words, $p\parallel q$ if and only if $|p-q|>1$. 
This is the relation whose properties in cellular games we study in this paper.

We now consider another game, that we call $G_2$. Each player in  game $G_2$ can either pick a strategy from $\mathbb Z_3$ or switch to playing matching pennies game with both of her neighbors. In the latter case, the strategy is a pair $(y_1,y_2)$, where $y_1,y_2\in \{head,tail\}$. Value $y_1$ is the strategy in the matching pennies game against the left neighbor and value $y_2$ is the strategy against the right neighbor. 

If the left and the right neighbors of a player choose, respectively, elements $x$ and $z$ from set $\mathbb Z_3$ such that $z-x\in \{[0],[1]\}$, then the player is not paid no matter what her strategy is.  Otherwise, player is rewarded to start  matching pennies games with both neighbors. If two adjacent players both play matching pennies game, then player on the right is rewarded to {\em match} the penny of the player on the left and player on the left is rewarded to {\em mismatch} the penny of the player on the right. An example of a Nash equilibria in such game is shown on Figure~\ref{introG2}.
\begin{figure}[ht]
\begin{center}
\vspace{3mm}
\scalebox{.7}{\includegraphics{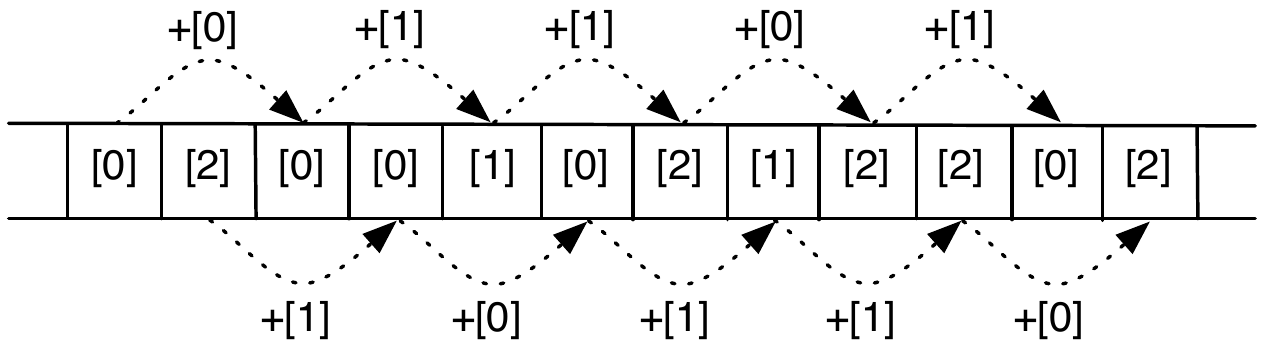}}
\vspace{0mm}
\caption{A Nash equilibrium of game $G_2$.}\label{introG2}
\vspace{0cm}
\end{center}
\vspace{0cm}
\end{figure}
The set of all Nash equilibria of this game consists of all strategy profiles in which each player chooses an element of $\mathbb Z_3$ in such a way that for each player $p$ player $p+2$ never chooses strategy that is two-more (in $\mathbb Z_3$) than the strategy of player $p$. An interesting property of this game  (see Theorem~\ref{g2 theorem}) is that $p\parallel q$ if and only if $|p-q|\neq 2$ and $p\neq q$. Thus, any two adjacent players are interchangeable, but players that are two-apart are not interchangeable. Note that we have achieved this by using each player to synchronize  strategies of her two neighbors. The ability of a player to do this significantly relies on the fact that pay-off function of each player is computed based on the choice of strategies by the player herself and her two adjacent neighbors. A player can not synchronize in the same way strategies of the players that are 2-players away to the left and 2-players away to the right. Thus, it would be natural to assume that it is impossible to construct a cellular game in which players, say 1000-players away, are non-interchangeable, but players that are closer are interchangeable.  If this hypothesis is true, then the following property is true for all cellular games:
\begin{equation}\label{hypothesis}
p \parallel (p+1) \wedge p\parallel (p+2) \wedge \dots \wedge p \parallel (p+999) \to p\parallel (p + 1000).
\end{equation}

The main surprising result of this paper is that such game does exist. Namely, we prove that for any $n\ge 1$ there is a cellular game $G_n$ in which any two players $p$ and $q$ are interchangeable if and only if $|p-q|=n$. The construction of such game for $n>2$ is non-trivial. Strategies of players in our game are special $(n-1)\times 2$ matrices of elements from $\mathbb Z_{n+1}$. 

Another way to state our result is to say that statement~(\ref{hypothesis}) is not a universal property of cellular games. Naturally, one can ask what statements are universal properties of all cellular games. We answer this question by giving a sound and complete axiomatization of such properties consisting of just the following three axioms:
\begin{enumerate}
\item Reflexivity: $a\parallel a\rightarrow a\parallel b$,
\item Homogeneity: $a\parallel b \rightarrow (a+c)\parallel (b+c)$,
\item Symmetry: $a\parallel b \rightarrow b\parallel a$.
\end{enumerate}
The proof of completeness takes multiple instances of the discussed above cellular game $G_n$ and combines them into a single cellular game needed to finish the proof.

The interchangeability relation between players of multi-player game could be further generalized to a relation between two sets of players. Properties of this relation are completely axiomatizable~\cite{nn11lori} by  Geiger, Paz, and Pearl axioms originally proposed to describe properties of independence in the probability theory~\cite{gpp91}. The same axioms also describe properties of Sutherland's~\cite{s86} nondeducibility relation in information flow theory~\cite{mn10} and of a non-interference relation in concurrency theory~\cite{mns11csl}. Naumov and Simonelli~\cite{ns12loft} described interchangeability properties between two sets of players in zero-sum games.

\paragraph{Functional Dependence.} Our work is closely related to paper by Harjes and Naumov~\cite{hn13lori} on functional dependence in cellular games.  Strategy of player $p$ functionally determines strategy of player $q$ in a cellular game if any two Nash equilibria of the game that agree on player $p$ also agree on player $q$. We denote this by $p\rhd q$. The functional dependence relation between players can not be expressed through interchangeability and vice versa. Harjes and Naumov gave complete axiomatization of functional dependence relation for cellular games with finite set of strategies:
\begin{enumerate}
\item Reflexivity: $a\rhd a$,
\item Transitivity: $a\rhd b \rightarrow (b\rhd c\rightarrow a\rhd c)$,
\item Homogeneity: $a\rhd b \rightarrow (a+c)\rhd (b+c)$,
\item Symmetry: $a\rhd b \rightarrow b\rhd a$.
\end{enumerate}
In spite of certain similarity between these axioms and our axioms for interchangeability, the proofs of completeness are very different. The completeness proof techniques used by Harjes and Naumov is based on properties of Fibonacci numbers and, to the best of our knowledge, can not be adopted to our setting. Similarly, the $(n-1)\times 2$-matrix based game $G_n$ that we use in the current paper can not be used to prove the results obtained in~\cite{hn13lori}.

The paper is structured as following. In Section~\ref{syntax and semantics}, we give the formal definition of a cellular game and introduce formal syntax and semantics of our theory. In Section~\ref{axioms}, we list the axioms of out logical systems and review some related notations. In Section~\ref{soundness}, we prove soundness of this logical system. The rest of the paper is dedicated to the proof of completeness. In  Section~\ref{g0}, Section~\ref{g1}, and Section~\ref{g2}, we define special cases of the game $G_n$ for $n=0,1,2$ and prove their key properties. Games $G_1$ and $G_2$ has already been informally discussed above. In Section~\ref{gn} we give general definition of $G_n$ for $n\ge 3$ and prove its properties. We combine results about games $G_n$ for all $n\ge 0$ in Section~\ref{combine}. In Section~\ref{ginf}, we introduce a very simple game $G_\infty$ and prove its properties.  In Section~\ref{product section}, we define a product operation on cellular games that can be used to combine several cellular games into one.  In Section~\ref{final steps}, we use the product of multiple games $G_n$ to finish the proof of completeness. Section~\ref{conclusion} concludes.

\section{Syntax and Semantics}\label{syntax and semantics}

In this section we formally define cellular games, Nash equilibrium, and introduce the formal syntax and the formal semantics of our logical system. The definition of interchangeability predicate $a\parallel b$ is a part of the formal semantics specification in Definition~\ref{true} below.

\begin{definition}\label{}
Let $\Phi$ be the minimal set of formulas that satisfies the following conditions:
\begin{enumerate}
\item $\bot\in\Phi$,
\item $a\parallel b \in \Phi$ for each integer $a,b\in \mathbb Z$,
\item if $\varphi\in\Phi$ and $\psi\in\Phi$, then $\varphi\rightarrow\psi\in\Phi$.
\end{enumerate}
\end{definition}

\begin{definition}\label{}
Cellular game is a pair $(S,u)$, where 
\begin{enumerate}
\item $S$ is any set of ``strategies",
\item $u$ is a ``pay-off" function from $S^3$ to the set of real numbers $\mathbb R$.
\end{enumerate}
\end{definition}
The domain of the function $u$ in the above definition is $S^3$ because the pay-off of each player is determined by her own strategy and the strategies of her two neighbors. By a strategy profile of a cellular game $(S,u)$ we mean any tuple $\langle s_i\rangle _{i\in \mathbb Z}$ such that $s_i\in S$ for each $i\in \mathbb Z$. 
\begin{definition}
A Nash equilibrium of a game $(S,u)$ is any strategy profile $\langle e_i\rangle _{i\in \mathbb Z}$ such that 
$u(e_{i-1},s,e_{i+1})\le u(e_{i-1},e_i,e_{i+1})$,
for each $i\in\mathbb Z$ and each $s\in S$.
\end{definition}
By $NE(G)$ we denote the set of all Nash equilibria of a cellular game $G$.
\begin{lemma}\label{ne-shift-lemma}
For each $k\in \mathbb Z$, if $\langle e_i\rangle _{i\in \mathbb Z}$ is a Nash equilibrium of a cellular game, then $\langle e_{i+k}\rangle _{i\in \mathbb Z}$ is a Nash equilibrium of the same game. \hfill \qed
\end{lemma}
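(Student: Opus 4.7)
The plan is to unwind the definition of Nash equilibrium after re-indexing. Suppose $\langle e_i\rangle_{i\in\mathbb Z}$ is a Nash equilibrium of a cellular game $(S,u)$, and fix any $k\in\mathbb Z$. Define $f_i = e_{i+k}$ for each $i\in\mathbb Z$; I want to verify that the profile $\langle f_i\rangle_{i\in\mathbb Z}$ satisfies the Nash condition with respect to the same pay-off function $u$.

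The verification is a direct substitution. Fix an arbitrary $i\in\mathbb Z$ and $s\in S$. By the definition of $f$, the required inequality
\[
u(f_{i-1},s,f_{i+1}) \le u(f_{i-1},f_i,f_{i+1})
\]
becomes
\[
u(e_{(i+k)-1},s,e_{(i+k)+1}) \le u(e_{(i+k)-1},e_{i+k},e_{(i+k)+1}),
\]
which is exactly the Nash equilibrium condition for $\langle e_j\rangle$ at index $j = i+k$. Since $\langle e_j\rangle$ is a Nash equilibrium, this inequality holds for every $j\in\mathbb Z$ and every $s\in S$, and in particular for $j=i+k$.

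There is no real obstacle here; the only conceptual point is that a cellular game is homogeneous, meaning that a single function $u$ governs the payoff of every player, so reindexing the profile by a constant shift does not change which inequalities need to be checked — it merely relabels them. This is precisely what makes the shift $k$ harmless.
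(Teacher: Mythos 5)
Your proof is correct: the direct re-indexing argument, reducing the Nash condition for $\langle e_{i+k}\rangle$ at player $i$ to the Nash condition for $\langle e_i\rangle$ at player $i+k$, is exactly the reason the paper states this lemma without proof. Your closing remark about homogeneity (a single pay-off function $u$ for all players) correctly identifies why the shift is harmless.
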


\begin{definition}\label{true}
For any formula $\varphi\in\Phi$ and any cellular game $G$, relation $G\vDash\varphi$ is defined recursively as follows:
\begin{enumerate}
\item $G\nvDash\bot$,
\item $G\vDash a\parallel b$ if and only if for each $\langle e'_i\rangle _{i\in \mathbb Z}\in NE(G)$ and each $\langle e''_i\rangle _{i\in \mathbb Z}\in NE(G)$, there is $\langle e_i\rangle _{i\in \mathbb Z}\in NE(G)$ such that $e_a=e'_a$ and $e_b=e''_b$,
\item $G\vDash \psi\rightarrow \chi$ if and only if $G\nvDash \psi$ or $G\vDash \chi$.
\end{enumerate}
\end{definition}

\section{Axioms}\label{axioms}

Our logical system, in addition to propositional tautologies in the language $\Phi$ and the Modus Ponens inference rule, contains the following axioms:

\begin{enumerate}
\item Reflexivity: $a\parallel a\rightarrow a\parallel b$,
\item Homogeneity: $a\parallel b \rightarrow (a+c)\parallel (b+c)$,
\item Symmetry: $a\parallel b \rightarrow b\parallel a$.
\end{enumerate}

We write $\vdash\varphi$ if formula $\phi$ is provable in our logical system. The next lemma gives an example of a proof in our logical system. This lemma will later be used in the proof of the completeness theorem. 
\begin{lemma}\label{abs value lemma}
If $|a-b|=|c-d|$, then $\vdash a\parallel b \to c \parallel d$.
\end{lemma}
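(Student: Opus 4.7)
The plan is a straightforward case analysis on the sign of $a-b$ versus $c-d$, using Homogeneity to translate any interchangeability pair to another pair with the same signed difference, and Symmetry to flip the sign.

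First I would dispose of the degenerate case $a = b$ (equivalently $|a-b| = 0$). Then $c = d$ as well, and the goal reduces to deriving $c \parallel c$ from $a \parallel a$. Apply Homogeneity with shift $c - a$: from $a \parallel a$ we obtain $(a+(c-a)) \parallel (a+(c-a))$, i.e., $c \parallel c$. Since $a \parallel b$ is just $a \parallel a$ and we want $c \parallel d$ which is $c \parallel c$, Modus Ponens on the Homogeneity axiom closes this case.

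Now suppose $a \neq b$ and $|a-b| = |c-d|$, so that either $c - d = a - b$ or $c - d = -(a-b)$. In the first subcase, set $k = c - a$; then also $k = d - b$, so the Homogeneity axiom $a \parallel b \to (a+k)\parallel(b+k)$ gives exactly $a \parallel b \to c \parallel d$. In the second subcase, first chain $a \parallel b \to b \parallel a$ by Symmetry, then apply Homogeneity with $k = c - b = d - a$ to get $b \parallel a \to c \parallel d$, and combine the two implications propositionally.

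The proof is essentially mechanical once the case split on the sign of the difference is made, so I do not expect a real obstacle; the only point requiring a little care is the $a = b$ case, where Homogeneity must be used with a nonobvious shift to port reflexivity from the integer $a$ to the integer $c$, after which a second application of the Reflexivity axiom delivers $c \parallel d$ if one wished to allow $c \neq d$ (though, as noted, $|c-d|=0$ forces $c=d$). In every case the derivation is a short composition of one or two axiom instances with Modus Ponens and propositional reasoning.
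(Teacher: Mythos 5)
Your proof is correct and follows essentially the same route as the paper's: use Symmetry to align the signs of $a-b$ and $c-d$, then apply a single instance of Homogeneity with the shift $c-a=d-b$. The paper compresses your explicit case analysis into a ``without loss of generality'' appeal to Symmetry, but the underlying derivation is identical.
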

\begin{proof}
Due to Symmetry axiom, without loss of generality we can assume that $a>b$ and $c>d$. Thus, assumption $|a-b|=|c-d|$ implies that $a-b=c-d$. Hence, $c - a = d -b$. Then, by Homogeneity axiom, $$\vdash a\parallel b \to (a+ (c-a))\parallel (b+(d-b)).$$ In other words, $\vdash a\parallel b \to c \parallel d$.
\end{proof}

\section{Soundness}\label{soundness}

Soundness of propositional tautologies and Modus Ponens inference rules is straightforward. We prove soundness of each of the remaining axioms of our logical system as a separate lemma.

\begin{lemma}[reflexivity]\label{reflexivity sound}
If $G\vDash a\parallel a$, then $G\vDash a\parallel b$ for each $a,b\in\mathbb Z$.
\end{lemma}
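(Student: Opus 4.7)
The plan is to unpack the semantics of $a\parallel a$ and observe that it collapses to a strong uniformity statement about position $a$.

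First, I would analyze the hypothesis $G\vDash a\parallel a$. By Definition~\ref{true}, this says that for any two Nash equilibria $\langle e'_i\rangle_{i\in\mathbb Z}$ and $\langle e''_i\rangle_{i\in\mathbb Z}$ of $G$, there is some Nash equilibrium $\langle e_i\rangle_{i\in\mathbb Z}$ such that $e_a=e'_a$ \emph{and} $e_a=e''_a$. Since both clauses pin down the single value $e_a$, they force $e'_a=e''_a$. Hence the hypothesis is equivalent to the assertion that all Nash equilibria of $G$ agree at position $a$.

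Second, I would verify $G\vDash a\parallel b$ directly from this observation. Given arbitrary Nash equilibria $\langle e'_i\rangle_{i\in\mathbb Z}$ and $\langle e''_i\rangle_{i\in\mathbb Z}$, I would simply take the witness $\langle e_i\rangle_{i\in\mathbb Z}:=\langle e''_i\rangle_{i\in\mathbb Z}$. By the first step, $e''_a=e'_a$, so $e_a=e'_a$ as required, while $e_b=e''_b$ holds trivially. This handles the case $b\neq a$; when $b=a$ the claim is just the hypothesis.

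A minor case to note is that when $NE(G)=\emptyset$ both the hypothesis and the conclusion are vacuously true, so no special handling is needed. I do not foresee any real obstacle: the whole argument is essentially a one-line consequence of the fact that interchangeability at a single position $a$ with itself forces all equilibria to coincide on that coordinate, after which any existing equilibrium can be reused as its own witness.
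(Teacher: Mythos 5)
Your proof is correct and follows essentially the same route as the paper's: both use $G\vDash a\parallel a$ to conclude that any two equilibria agree at position $a$, and then take $\langle e''_i\rangle_{i\in\mathbb Z}$ itself as the witness for $a\parallel b$. The only difference is cosmetic (you phrase the first step as an equivalence and note the vacuous case explicitly, which the paper omits).
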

\begin{proof}
Let $\langle e'_i\rangle _{i\in \mathbb Z}\in NE(G)$ and $\langle e''_i\rangle _{i\in \mathbb Z}\in NE(G)$, we need to show that there exists $\langle e_i\rangle _{i\in \mathbb Z}\in NE(G)$ such that $e_a=e'_a$ and $e_b=e''_b$. Indeed, by assumption $G\vDash a\parallel a$, there exists $\langle e'''_i\rangle _{i\in \mathbb Z}\in NE(G)$ such that $e'''_a=e'_a$ and $e'''_a=e''_a$. Thus, $e'_a=e''_a$. Take $\langle e_i\rangle _{i\in \mathbb Z}\in NE(G)$ to be $\langle e''_i\rangle _{i\in \mathbb Z}\in NE(G)$. Then $e_a=e''_a=e'_a$ and $e_b=e''_b$.
\end{proof}

\begin{lemma}[homogeneity]\label{homogeneity sound}
If $G\vDash a\parallel b$, then $G\vDash (a+c)\parallel (b+c)$, for each $a,b,c\in\mathbb Z$.
\end{lemma}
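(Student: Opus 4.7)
The plan is to leverage the translation invariance of Nash equilibria given by Lemma~\ref{ne-shift-lemma}: any shift of an equilibrium is again an equilibrium. So homogeneity of $\parallel$ should follow from ``shifting the picture by $c$, applying $a\parallel b$, and shifting back''.

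More concretely, fix arbitrary Nash equilibria $\langle e'_i\rangle_{i\in\mathbb Z}$ and $\langle e''_i\rangle_{i\in\mathbb Z}$ of $G$; the goal is to produce an equilibrium $\langle e_i\rangle_{i\in\mathbb Z}$ with $e_{a+c}=e'_{a+c}$ and $e_{b+c}=e''_{b+c}$. First I would apply Lemma~\ref{ne-shift-lemma} with $k=c$ to both profiles, obtaining the shifted equilibria $\langle e'_{i+c}\rangle_{i\in\mathbb Z}$ and $\langle e''_{i+c}\rangle_{i\in\mathbb Z}$. Then, invoking the hypothesis $G\vDash a\parallel b$ on these two equilibria, I get an equilibrium $\langle f_i\rangle_{i\in\mathbb Z}$ with $f_a=e'_{a+c}$ and $f_b=e''_{b+c}$. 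Finally, applying Lemma~\ref{ne-shift-lemma} once more with $k=-c$ to $\langle f_i\rangle_{i\in\mathbb Z}$ yields the desired equilibrium $\langle e_i\rangle_{i\in\mathbb Z}:=\langle f_{i-c}\rangle_{i\in\mathbb Z}$, for which $e_{a+c}=f_a=e'_{a+c}$ and $e_{b+c}=f_b=e''_{b+c}$.

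There is no genuine obstacle here: the entire argument is bookkeeping about index shifts, and the only semantic content is the translation-invariance lemma that has already been stated. The mildest place to be careful is just verifying that the indices line up after two successive shifts (once by $+c$ to feed into $a\parallel b$, once by $-c$ to bring the witness back to the original frame), which is immediate.
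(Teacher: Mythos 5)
Your proposal is correct and follows exactly the same route as the paper's proof: shift both equilibria by $c$ via Lemma~\ref{ne-shift-lemma}, apply $G\vDash a\parallel b$ to the shifted profiles, and shift the resulting witness back by $-c$. The index bookkeeping matches the paper's verbatim (your $\langle f_i\rangle_{i\in\mathbb Z}$ is the paper's $\langle e'''_i\rangle_{i\in\mathbb Z}$).
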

\begin{proof}
Let $\langle e'_i\rangle _{i\in \mathbb Z}\in NE(G)$ and $\langle e''_i\rangle _{i\in \mathbb Z}\in NE(G)$, we need to show that there exists $\langle e_i\rangle _{i\in \mathbb Z}\in NE(G)$ such that $e_{a+c}=e'_{a+c}$ and $e_{b+c}=e''_{b+c}$. By Lemma~\ref{ne-shift-lemma}, $\langle e'_{i+c}\rangle _{i\in \mathbb Z}\in NE(G)$ and $\langle e''_{i+c}\rangle _{i\in \mathbb Z}\in NE(G)$. Then, by assumption $G\vDash a\parallel b$, there exists $\langle e'''_i\rangle _{i\in \mathbb Z}\in NE(G)$ such that $e'''_a=e'_{a+c}$ and $e'''_b=e''_{b+c}$. Lemma~\ref{ne-shift-lemma} implies that $\langle e'''_{i-c}\rangle _{i\in \mathbb Z}\in NE(G)$. Take $\langle e_i\rangle _{i\in \mathbb Z}\in NE(G)$ to be $\langle e'''_{i-c}\rangle _{i\in \mathbb Z}$. Then, $e_{a+c}=e'''_{(a+c)-c}=e'''_a=e'_{a+c}$ and $e_{b+c}=e'''_{(b+c)-c}=e'''_b=e''_{b+c}$.
\end{proof}

\begin{lemma}[symmetry]\label{symmetry sound}
If $G\vDash a\parallel b$, then $G\vDash b\parallel a$ for each $a,b\in\mathbb Z$.
\end{lemma}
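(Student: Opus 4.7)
The plan is essentially to unfold the definition of $\vDash$ for the interchangeability predicate and observe that the hypothesis $G\vDash a\parallel b$ is symmetric in the two equilibria being combined, so it directly supplies what is needed for $G\vDash b\parallel a$ after swapping the roles of the two input equilibria.

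More concretely, I would start the proof by fixing arbitrary $\langle e'_i\rangle_{i\in\mathbb Z}\in NE(G)$ and $\langle e''_i\rangle_{i\in\mathbb Z}\in NE(G)$ and noting that the goal is to exhibit some $\langle e_i\rangle_{i\in\mathbb Z}\in NE(G)$ with $e_b=e'_b$ and $e_a=e''_a$. I would then apply the assumption $G\vDash a\parallel b$ to the pair $\langle e''_i\rangle_{i\in\mathbb Z}$ (in the role of the first equilibrium) and $\langle e'_i\rangle_{i\in\mathbb Z}$ (in the role of the second equilibrium). By clause (2) of Definition~\ref{true}, this yields an equilibrium $\langle e_i\rangle_{i\in\mathbb Z}\in NE(G)$ with $e_a=e''_a$ and $e_b=e'_b$, which is exactly the equilibrium required by the $b\parallel a$ clause.

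There is no real obstacle here: unlike the homogeneity argument, no shift invariance (Lemma~\ref{ne-shift-lemma}) or auxiliary construction is needed, because the semantic clause for $a\parallel b$ only asks for the existence of a combining equilibrium whose values at the two named coordinates are prescribed, and the prescription is invariant under swapping which equilibrium contributes which coordinate. The entire proof is thus a one-line application of the definition after a relabeling of the hypothesized equilibria.
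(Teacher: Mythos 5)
Your proof is correct and is essentially identical to the paper's: both fix two equilibria, apply the hypothesis $G\vDash a\parallel b$ with the roles of the two equilibria swapped, and read off the required witness directly from clause (2) of Definition~\ref{true}. No further comment is needed.
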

\begin{proof}
Let $\langle e'_i\rangle _{i\in \mathbb Z}\in NE(G)$ and $\langle e''_i\rangle _{i\in \mathbb Z}\in NE(G)$. We need to show that there is $\langle e_i\rangle _{i\in \mathbb Z}\in NE(G)$ such that $e_b=e'_b$ and $e_a=e''_a$. Indeed, by assumption $G\vDash a\parallel b$, there exists $\langle e_i\rangle _{i\in \mathbb Z}\in NE(G)$ such that $e_a=e''_a$ and $e_b=e'_b$. 
\end{proof}

\section{Completeness}

In this section we prove completeness of our logical system by showing that for each formula $\phi$ such that $\nvdash\phi$ there exists a cellular game $G$ such that $G\nvDash\phi$. The game $G$ will be constructed as a composition of multiple cellular ``mini" games $G_n$. Throughout this paper, by $[k]_n$ we mean the equivalence class of $k$ modulo $n$. In other words, $[k]_n\in\mathbb Z_n$. We sometimes omit subscript $n$ in the expression $[k]_n$ if the value of the subscript is clear from the context. While proving properties of the game $G_n$, we will find useful the following technical lemma:

\begin{lemma}\label{little lemma}
For any $n\ge 1$, any $u,v\in\mathbb Z_n$, and any $k\ge n$, there is a sequence of classes $z_1,\dots,z_k\in \mathbb Z_n$ such that
\begin{enumerate}
\item $z_1=u$,
\item $z_k=v$,
\item $z_{i+1}-z_i\in\{[0]_n,[1]_n\}$ for each $i<k$.
\end{enumerate}
\qed
\end{lemma}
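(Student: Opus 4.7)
The plan is to reduce the problem to a simple counting argument. Viewing a sequence $z_1,\dots,z_k$ with $z_{i+1}-z_i\in\{[0]_n,[1]_n\}$ as a walk on the cycle $\mathbb{Z}_n$ that at each step either stays put or advances by one, the endpoint satisfies $z_k = z_1 + m\cdot[1]_n$, where $m$ is the number of ``$+1$'' transitions. Writing $d$ for the unique integer in $\{0,1,\dots,n-1\}$ with $[d]_n = v-u$, the requirement $z_k=v$ becomes $m \equiv d \pmod{n}$, and the constraint that there are $k-1$ transitions total becomes $0\le m \le k-1$.

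Next, I would observe that the choice $m = d$ satisfies both requirements simultaneously. The congruence is immediate, and the inequality holds because $d \le n-1 \le k-1$ by the hypothesis $k\ge n$.

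Then I would make this concrete by placing all the ``$+1$'' transitions at the beginning: set
\[
z_i = [u + (i-1)]_n \quad \text{for } 1 \le i \le d+1,
\qquad
z_i = [v]_n \quad \text{for } d+1 \le i \le k.
\]
The two definitions agree at $i = d+1$ since $[u+d]_n = v$, so this is well-defined. Condition (1) holds because $z_1 = [u]_n = u$, condition (2) holds because $k \ge n > d$ so the second clause applies at $i = k$, and condition (3) holds because each transition $z_{i+1}-z_i$ is either $[1]_n$ (for $i < d+1$) or $[0]_n$ (for $i \ge d+1$).

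There is no real obstacle here; the lemma is a direct pigeonhole-style observation, and the only role of the hypothesis $k\ge n$ is to guarantee that $k-1$ transitions suffice to cover any displacement on the cycle of length $n$.
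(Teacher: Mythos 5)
Your proof is correct: the construction that spends the first $d$ steps advancing by $[1]_n$ and then idles at $v$ is well-defined, and the hypothesis $k\ge n$ is used exactly where it must be, to guarantee $d+1\le k$. The paper itself states this lemma without proof (it ends with an immediate end-of-proof mark, treating it as obvious), and your argument is precisely the routine verification being omitted, so there is nothing to compare beyond noting that you have supplied the expected details.
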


\subsection{Game $G_0$}\label{g0}
We start with a very simple game $G_0$.
\begin{definition}\label{}
Let $G_0$ be pair $(\mathbb Z_2, 0)$, where pay-off function is constant $0$.
\end{definition}
\begin{lemma}\label{g0ne}
The set of all Nash equilibria of the game $G_0$ is set of all possible strategy profiles of this game.\qed
\end{lemma}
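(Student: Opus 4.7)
The plan is to unfold the definition of Nash equilibrium directly. By definition of $G_0$, the pay-off function $u\colon \mathbb Z_2^3 \to \mathbb R$ is the constant function $0$. Fix any strategy profile $\langle e_i\rangle_{i\in\mathbb Z}$ with $e_i\in\mathbb Z_2$. For every $i\in\mathbb Z$ and every alternative strategy $s\in \mathbb Z_2$, both sides of the equilibrium inequality evaluate to $0$, so $u(e_{i-1},s,e_{i+1})\le u(e_{i-1},e_i,e_{i+1})$ reduces to $0\le 0$, which holds.

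Hence every profile satisfies the Nash equilibrium condition, which establishes that $NE(G_0)$ is the full set of strategy profiles of $G_0$. There is no real obstacle here; the statement follows immediately from the fact that with a constant pay-off function no deviation is ever strictly profitable, so the definition of Nash equilibrium becomes vacuous.
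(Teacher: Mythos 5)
Your proof is correct and is exactly the argument the paper leaves implicit (the lemma is stated with the proof omitted as immediate): since the pay-off function of $G_0$ is constantly $0$, the equilibrium inequality reduces to $0\le 0$ for every player and every deviation, so every strategy profile is a Nash equilibrium. Nothing more is needed.
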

\begin{theorem}\label{g0 theorem}
$G_0\vDash a\parallel b$ if and only if $a\neq b$.
\end{theorem}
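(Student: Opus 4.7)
The plan is to exploit Lemma~\ref{g0ne}, which tells us that in $G_0$ every strategy profile is a Nash equilibrium. This trivializes the interchangeability condition: $G_0\vDash a\parallel b$ reduces to the purely combinatorial statement that for any two functions $e',e'':\mathbb Z\to\mathbb Z_2$ there is a function $e:\mathbb Z\to\mathbb Z_2$ with $e_a=e'_a$ and $e_b=e''_b$.

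For the direction $a\neq b\Rightarrow G_0\vDash a\parallel b$, I would take arbitrary $\langle e'_i\rangle_{i\in\mathbb Z}$ and $\langle e''_i\rangle_{i\in\mathbb Z}$ in $NE(G_0)$ and define the profile $\langle e_i\rangle_{i\in\mathbb Z}$ by $e_a:=e'_a$, $e_b:=e''_b$, and $e_i:=[0]_2$ for all $i\in\mathbb Z\setminus\{a,b\}$. Since $a\neq b$ the assignment is well-defined, and by Lemma~\ref{g0ne} this profile is automatically a Nash equilibrium with the required values at $a$ and $b$.

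For the contrapositive of the other direction, assume $a=b$. I would construct two strategy profiles that disagree at position $a$: the constant profile $e'_i:=[0]_2$ for every $i$, and the constant profile $e''_i:=[1]_2$ for every $i$. Both are Nash equilibria by Lemma~\ref{g0ne}. Any candidate $\langle e_i\rangle_{i\in\mathbb Z}$ witnessing interchangeability would have to satisfy $e_a=e'_a=[0]_2$ and simultaneously $e_a=e_b=e''_b=[1]_2$, which is impossible. Hence $G_0\nvDash a\parallel a$.

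There is no real obstacle here; the argument is a routine unpacking of Definition~\ref{true} against the fact that $NE(G_0)$ is the full set of strategy profiles. The only care needed is to make sure the witness profile in the first direction is well-defined, which is exactly where the hypothesis $a\neq b$ is used.
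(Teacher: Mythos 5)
Your proposal is correct and follows essentially the same route as the paper's proof: the forward direction is refuted for $a=b$ using the two constant profiles $[0]_2$ and $[1]_2$, and the reverse direction constructs the witness profile equal to $e'_a$ at $a$, $e''_b$ at $b$, and $[0]_2$ elsewhere, invoking Lemma~\ref{g0ne} in both places. No gaps.
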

\begin{proof}
$(\Rightarrow):$ Suppose that $G_0\vDash a\parallel b$ and $a=b$. Consider strategy profiles $\langle e'_k\rangle_{k\in \mathbb Z}$ and $\langle e''_k\rangle_{k\in \mathbb Z}$ such that $e'_k=[0]$ and $e''_k=[1]$ for each $k\in \mathbb Z$. By Lemma~\ref{g0ne}, strategy profiles $\langle e'_k\rangle_{k\in \mathbb Z}$ and $\langle e''_k\rangle_{k\in \mathbb Z}$ are Nash equilibria of the game $G_0$. Thus, by the assumption $G_0\vDash a\parallel b$, there must exist Nash equilibrium $\langle e_k\rangle_{k\in \mathbb Z}$ such that $e_a=e'_a$ and $e_b=e'_b$. Recall that $a=b$. Thus, $[0]_2=e'_a=e_a=e_b=e''_b=[1]_2$, which is a contradiction. 
  
\noindent $(\Leftarrow):$ Assume that $a\neq b$ and consider any two Nash equilibria $\langle e'_k\rangle_{k\in \mathbb Z}$ and $\langle e''_k\rangle_{k\in \mathbb Z}$ of the game $G_0$. We need to show that there is Nash equilibrium $\langle e_k\rangle_{k\in \mathbb Z}$ such that $e_a=e'_a$ and $e_b=e''_b$. Indeed, consider strategy profile $\langle e_k\rangle_{k\in \mathbb Z}$ such that
$$
e_k=
\begin{cases}
e'_a 	& \mbox{ if $k=a$},\\
e''_b 	& \mbox{ if $k=b$},\\
[0]_2	& \mbox{ otherwise.}
\end{cases}
$$
By Lemma~\ref{g0ne}, strategy profile $\langle e_k\rangle_{k\in \mathbb Z}$ is a Nash equilibrium of the game $G_0$.
 \end{proof}

\subsection{Game $G_1$}\label{g1}

Let us now recall from the introduction the definition of game $G_n$ for $n=1$ and prove its important property. Each player in this game has only three strategies. We identify these strategies with congruence classes in $\mathbb Z_3$. Each player is rewarded if she either matches the strategy of her left neighbor or chooses the strategy one more (in $\mathbb Z_3$) than the strategy of the left neighbor. This is formally specified by the definition below.

\begin{definition}\label{}
Let game $G_1$ be pair $(\mathbb Z_3, u)$, where 
$$
u(x,y,z)=
\begin{cases}
1 	& \mbox{ if $y \neq x +[2]_3$},\\
0	& \mbox{ otherwise.}
\end{cases}
$$
\end{definition}
An example of a Nash equilibrium of game $G_1$ is depicted in Figure~\ref{introG1} in the introduction.

\begin{lemma}\label{g1ne}
Strategy profile $\langle e_k\rangle_{k\in \mathbb Z}$ is a Nash equilibrium of the game $G_1$ if and only if $e_k-e_{k-1}\in \{[0]_3,[1]_3\}$ for each $k\in \mathbb Z$. \qed
\end{lemma}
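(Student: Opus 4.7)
The plan is to unfold the definition of Nash equilibrium and exploit the fact that in game $G_1$ the payoff function $u(x,y,z)$ does not actually depend on the right-neighbor coordinate $z$. This means the best response of player $k$ is determined entirely by the choice $e_{k-1}$ of her left neighbor, which makes the equivalence almost immediate.

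For the forward direction, I would fix a Nash equilibrium $\langle e_k\rangle_{k\in\mathbb Z}$ and an arbitrary $k$. The equilibrium condition says $u(e_{k-1}, s, e_{k+1}) \le u(e_{k-1}, e_k, e_{k+1})$ for every $s\in\mathbb Z_3$. Plugging in $s = e_{k-1}$ yields $u(e_{k-1}, e_{k-1}, e_{k+1}) = 1$ (since $e_{k-1} \neq e_{k-1} + [2]_3$), forcing $u(e_{k-1}, e_k, e_{k+1}) = 1$ because the payoff is always in $\{0,1\}$. By the definition of $u$ this means $e_k \neq e_{k-1} + [2]_3$, which is equivalent to $e_k - e_{k-1} \in \{[0]_3, [1]_3\}$.

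For the backward direction, I would assume $e_k - e_{k-1} \in \{[0]_3,[1]_3\}$ for every $k$. Then $e_k \neq e_{k-1} + [2]_3$, so $u(e_{k-1}, e_k, e_{k+1}) = 1$. Since the codomain of $u$ is $\{0,1\}$, no alternative strategy $s$ can give player $k$ a payoff exceeding $1$, so the equilibrium inequality $u(e_{k-1}, s, e_{k+1}) \le u(e_{k-1}, e_k, e_{k+1}) = 1$ is satisfied for all $s\in\mathbb Z_3$. As this holds for every $k$, the profile is a Nash equilibrium.

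There is no real obstacle here: the key observation, that $u$ ignores its third argument, collapses the Nash condition at cell $k$ into a purely local constraint between $e_{k-1}$ and $e_k$. The proof is essentially a direct verification.
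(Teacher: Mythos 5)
Your proof is correct; the paper states this lemma without proof (marking it as immediate), and your direct verification --- noting that $u$ ignores its third argument and that playing $s=e_{k-1}$ always secures the maximal payoff $1$ --- is exactly the routine argument being omitted. Nothing further is needed.
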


\begin{theorem}\label{g1 theorem}
$G_1\vDash a\parallel b$ if and only if $|a-b| > 1$.
\end{theorem}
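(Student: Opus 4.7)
The plan is to use Lemma~\ref{g1ne}'s characterization of Nash equilibria throughout: a strategy profile is an equilibrium iff consecutive differences lie in $\{[0]_3,[1]_3\}$.

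For the $(\Rightarrow)$ direction I argue by contrapositive and split on whether $a=b$ or $|a-b|=1$. If $a=b$, I take $e'_k=[0]_3$ and $e''_k=[1]_3$ for all $k$; both are equilibria by Lemma~\ref{g1ne}, and any common witness would force $[0]_3=e_a=e_b=[1]_3$. If $|a-b|=1$, WLOG $b=a+1$ (the case $b=a-1$ is symmetric), I take $e'_k=[0]_3$ and $e''_k=[2]_3$ for all $k$. A witness equilibrium $\langle e_k\rangle$ would have $e_a=[0]_3$ and $e_{a+1}=[2]_3$, yielding $e_{a+1}-e_a=[2]_3\notin\{[0]_3,[1]_3\}$, contradicting Lemma~\ref{g1ne}.

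For the $(\Leftarrow)$ direction, assume $|a-b|>1$ and, without loss of generality, $a<b$ (otherwise swap the roles of $e'$ and $e''$, which we may do because the condition on $a,b$ is symmetric). Given equilibria $\langle e'_k\rangle$ and $\langle e''_k\rangle$, the idea is to follow $e'$ up to position $a$, follow $e''$ from position $b$ on, and bridge positions $a,a+1,\dots,b$ with a carefully chosen sequence. Since $b-a\ge 2$, we have $b-a+1\ge 3=n$, so Lemma~\ref{little lemma} (with $n=3$, $u=e'_a$, $v=e''_b$, $k=b-a+1$) supplies a sequence $z_a,\dots,z_b\in\mathbb Z_3$ with $z_a=e'_a$, $z_b=e''_b$, and $z_{i+1}-z_i\in\{[0]_3,[1]_3\}$ for all $a\le i<b$. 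Define
$$
e_k=\begin{cases} e'_k & \text{if } k\le a,\\ z_k & \text{if } a\le k\le b,\\ e''_k & \text{if } k\ge b.\end{cases}
$$
This is well-defined at the boundary positions because $z_a=e'_a$ and $z_b=e''_b$, and clearly $e_a=e'_a$, $e_b=e''_b$.

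To finish, I verify $e_{k+1}-e_k\in\{[0]_3,[1]_3\}$ for every $k$ and invoke Lemma~\ref{g1ne}. For $k<a$ and $k\ge b$ this is inherited from the assumption that $e'$ and $e''$ are equilibria; for $a\le k<b$ it holds by construction of $z$. The only mild subtlety is the matching at the seams $k=a$ and $k=b$, which is handled by the boundary equalities $z_a=e'_a$ and $z_b=e''_b$. The only technical step requiring care is ensuring the hypothesis $k\ge n$ in Lemma~\ref{little lemma}, which is precisely the threshold $|a-b|\ge 2$ that we have.
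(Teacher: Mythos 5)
Your proof is correct and follows essentially the same route as the paper: the same pair of constant equilibria ($[0]$ vs.\ $[1]$, and $[0]$ vs.\ $[2]$) for the forward direction, and Lemma~\ref{little lemma} to bridge positions $a$ through $b$ for the converse. The only (harmless) difference is that outside $[a,b]$ you extend by copying $e'$ and $e''$, whereas the paper extends by the constants $x_a$ and $x_b$; both satisfy the difference condition of Lemma~\ref{g1ne}.
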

\begin{proof} 
Without loss of generality, we can assume that $a\le b$.

\noindent$(\Rightarrow)$ 
First, suppose that $a=b$. Consider strategy profiles  $e'=\langle e'_k\rangle_{k\in \mathbb Z}$ and $e''=\langle e''_k\rangle_{k\in \mathbb Z}$ of the game $G_1$ such that $e'_k=[0]_3$ and $e''_k=[1]_3$ for each $k\in \mathbb Z$. By Lemma~\ref{g1ne}, $e',e''\in NE(G_1)$. Assume that $G_1\vDash a\parallel b$, then there must exist $e=\langle e_k\rangle_{k\in \mathbb Z}\in NE(G_1)$ such that $e_a=e'_a$ and $e_b=e''_b$. Thus, $[0]_3=e'_a=e_a=e_b=e''_b=[1]_3$, due to the assumption $a=b$. Therefore, $[0]_3=[1]_3$, which is a contradiction.

Next, assume that $b=a+1$. Consider strategy profile $e'=\langle e'_k\rangle_{k\in \mathbb Z}$ such that $e'_k=[0]_3$ for each $k\in \mathbb Z$ and  strategy profile $e''=\langle e''_k\rangle_{k\in \mathbb Z}$ such that $e''_k=[2]_3$ for each $k\in \mathbb Z$. By Lemma~\ref{g1ne}, $e',e''\in NE(G_1)$. At the same time, due to the same Lemma~\ref{g1ne}, there can not be $e=\langle e_k\rangle_{k\in \mathbb Z}\in NE(G_1)$ such that $e_a=[0]_3$ and $e_b=e_{a+1}=[2]_3$. Therefore, $G_1\nvDash a\parallel b$.

\noindent $(\Leftarrow)$ Assume that $|a-b|>1$. Thus, $a+1<b$ due to the assumption $a\le b$. Consider any two equilibria $e'=\langle e'_k\rangle_{k\in \mathbb Z}$ and $e''=\langle e'_k\rangle_{k\in \mathbb Z}$ of game $G_1$. We will show that there is $e=\langle e_k\rangle_{k\in \mathbb Z}\in NE(G_1)$ such that $e_a=e'_a$ and $e_b=e''_b$. Indeed, since $a+1<b$, by Lemma~\ref{little lemma}, there must exist sequence of congruence classes $x_a,x_{a+1},x_{a+2},\dots,x_b$  in $\mathbb Z_3$ such that  $x_{k} - x_{k-1} \in\{[0]_3,[1]_3\}$ for each $a<k\le b$. Define strategy profile $e=\langle e_k\rangle_{k\in \mathbb Z}$ as
$$
e_k=
\begin{cases}
x_a 	& \mbox{ if $k< a$},\\
x_k 	& \mbox{ if $a\le k\le b$},\\
x_b	& \mbox{ if $b < k$}.
\end{cases}
$$
By Lemma~\ref{g1ne}, $e\in NE(G_1)$.
\end{proof}

\subsection{Game $G_2$}\label{g2}

We now recall definition of game $G_n$ for $n=2$ from the introduction. Each player in this game can either pick a strategy from $\mathbb Z_3$ or switch to playing matching pennies game with both of her neighbors. In the latter case, the strategy is a pair $(y_1,y_2)$, where $y_1,y_2\in \{head,tail\}$. Value $y_1$ is the strategy in the matching pennies game against the left neighbor and value $y_2$ is the strategy against the right neighbor. 

If the left and the right neighbors of a player choose, respectively, elements $x$ and $z$ from set $\mathbb Z_3$ such that $z-x\in \{[0]_3,[1]_3\}$, then the player is not paid no matter what her strategy is.  Otherwise, player is rewarded to start  matching pennies games with both neighbors. If two adjacent players both play matching pennies game, then player on the right is rewarded to {\em match} the penny of the player on the left and player on the left is rewarded to {\em mismatch} the penny of the player on the right. We formally capture the above description of the game $G_2$ in the following definition.

\begin{definition}\label{}
Let game $G_2$ be pair $(S, u)$, where 
\begin{enumerate}
	
	\item $S=\mathbb Z_3\cup \{head,tail\}^2$. In other words, strategy of each player 
		in this game could be either a congruence class from $\mathbb Z_3$ or a pair $(y_1,y_2)$ 
		such that each of $u$ and $v$ is either ``head" or ``tail".
		
	\item pay-off function $u(x,y,z)=u_1(x,y,z)+u_2(x,y)+u_3(y,z)$ is the sum of three 
		separate pay-offs specified below:
		
	\begin{enumerate}
	
		\item if either at least one of $x$ and $z$ is not in $\mathbb Z_3$ or if they are both in $\mathbb Z_3$
			and $x+[2]_3=z$, then pay-off $u_1(x,y,z)$ rewards player $y$ 
			not to be an element of $\mathbb Z_3$:
			$$u_1(x,y,z)=
				\begin{cases}
					1 	& \mbox{ if $y\in \{head,tail\}^2$},\\
					0 	& \mbox{ otherwise}.
				\end{cases}
			$$
			in all other cases $u_1(x,y,z)$ is equal to zero.
		
		\item if both $x=(x_1,x_2)$ and $y=(y_1,y_2)$ are in $\{head,tail\}^2$, then pay-off $u_2(x,y)$ rewards
			player $y$ if $x_2 = y_1$:
			$$u_2((x_1,x_2),(y_1,y_2))=
				\begin{cases}
					1 	& \mbox{ if $x_2= y_1$},\\
					0 	& \mbox{ otherwise}.
				\end{cases}
			$$
			in all other cases $u_2(x,y)$ is equal to zero.
			
		\item if both $y=(y_1,y_2)$ and $z=(z_1,z_2)$ are in $\{head,tail\}^2$, then pay-off $u_3(y,z)$ rewards
			player $y$ if $y_2\neq z_1$:
			$$u_2((y_1,y_2),(z_1,z_2))=
				\begin{cases}
					1 	& \mbox{ if $y_2\neq z_1$},\\
					0 	& \mbox{ otherwise}.
				\end{cases}
			$$
			in all other cases $u_3(y,z)$ is equal to zero.		 
	\end{enumerate}
\end{enumerate}
\end{definition}
An example of a Nash equilibrium of the game $G_2$ has been given in the introduction in Figure~\ref{introG2}.

\begin{lemma}\label{g2ne}
Strategy profile $\langle e_k\rangle_{k\in \mathbb Z}$ is a Nash equilibrium of game $G_2$ if and only if the following two conditions are satisfied:
\begin{enumerate}
\item $e_k\in \mathbb Z_3$ for each $k\in \mathbb Z$,
\item $e_{k+2}-e_k\in \{[0]_3,[1]_3\}$ for each $k\in \mathbb Z$.
\end{enumerate} \qed
\end{lemma}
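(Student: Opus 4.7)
The plan is to prove both directions of the biconditional. The $(\Leftarrow)$ direction is essentially a direct payoff computation, while $(\Rightarrow)$ requires a case analysis that rules out any player choosing a pennies strategy.

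For $(\Leftarrow)$, assuming conditions (1) and (2), I would verify that each player currently receives payoff $0$ and no deviation can improve on that. Since both neighbors of player $k$ lie in $\mathbb{Z}_3$, the components $u_2$ and $u_3$ vanish for any strategy of $k$. Moreover, because $e_{k+1} - e_{k-1} \in \{[0]_3,[1]_3\}$, the triggering condition for $u_1$ fails regardless of $k$'s strategy, so $u_1$ also vanishes. Hence every available strategy yields payoff $0$, and $\langle e_k\rangle$ is a Nash equilibrium.

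For $(\Rightarrow)$, let $\langle e_k\rangle$ be a Nash equilibrium. The first step is to establish the local claim: \emph{if $e_k\in\mathbb{Z}_3$, then both $e_{k-1},e_{k+1}\in\mathbb{Z}_3$ and $e_{k+1}-e_{k-1}\in\{[0]_3,[1]_3\}$}. Indeed, player $k$ currently earns $0$, whereas deviating to any pennies strategy would yield $u_1=1$ whenever some neighbor is not in $\mathbb{Z}_3$ or $e_{k+1}-e_{k-1}=[2]_3$; the equilibrium condition rules out such a gain. The second step propagates this claim: if any one player $k$ plays pennies, then by the contrapositive of the local claim applied at $k-1$ and $k+1$, both neighbors must play pennies as well, and inductively \emph{every} player plays pennies.

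The third and hardest step is to rule out this all-pennies scenario. In it, each player's payoff has the form $1+u_2+u_3$, and the within-pennies deviation analysis for the player at position $k$ with strategy $(y_1^k,y_2^k)$ forces simultaneously $y_1^k=y_2^{k-1}$ and $y_2^k\neq y_1^{k+1}$ for every $k\in\mathbb{Z}$. Reindexing the first identity as $y_2^k=y_1^{k+1}$ directly contradicts the second, so an all-pennies equilibrium cannot exist. Consequently every player chooses from $\mathbb{Z}_3$, giving condition (1), and the local claim (reindexed $k\mapsto k+1$) yields $e_{k+2}-e_k\in\{[0]_3,[1]_3\}$, giving condition (2). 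The main obstacle is this third step: the bookkeeping that both propagates pennies play across the entire line and then internally contradicts it; the other steps are straightforward payoff calculations.
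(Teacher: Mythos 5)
Your proof is correct: the backward direction is the routine payoff check, and the forward direction correctly combines the local claim, the propagation of pennies play, and the match/mismatch contradiction that the game was designed to produce. The paper states this lemma without proof, and your argument is exactly the intended verification sketched informally in the introduction, so there is nothing to compare beyond noting that you have filled in the omitted details correctly.
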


\begin{theorem}\label{g2 theorem}
$G_2\vDash a\parallel b$ if and only if either $|a-b|=1$ or $|a-b| > 2$.
\end{theorem}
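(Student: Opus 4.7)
The plan is to mirror the structure of the proof of Theorem~\ref{g1 theorem}, using Lemma~\ref{g2ne} to characterize equilibria and then splitting into cases according to $|a-b|$. The crucial observation, immediate from condition (2) of Lemma~\ref{g2ne}, is that the constraint on a Nash equilibrium of $G_2$ only couples positions of the \emph{same} parity. So an equilibrium decomposes into two independent chains in $\mathbb{Z}_3$ (one on even indices, one on odd) each satisfying $e_{k+2}-e_k\in\{[0]_3,[1]_3\}$. Without loss of generality throughout I would assume $a\le b$.

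For the $(\Rightarrow)$ direction I would contrapose and show that $G_2\nvDash a\parallel b$ whenever $a=b$ or $|a-b|=2$. For $a=b$, I would take $\langle e'_k\rangle=\langle [0]_3\rangle$ and $\langle e''_k\rangle=\langle [1]_3\rangle$, both equilibria by Lemma~\ref{g2ne}; combining them at $a=b$ forces $[0]_3=[1]_3$. For $b=a+2$, I would use $\langle e'_k\rangle=\langle [0]_3\rangle$ and $\langle e''_k\rangle=\langle [2]_3\rangle$; any equilibrium $e$ with $e_a=[0]_3$ and $e_{a+2}=[2]_3$ would have $e_{a+2}-e_a=[2]_3\notin\{[0]_3,[1]_3\}$, violating Lemma~\ref{g2ne}.

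For the $(\Leftarrow)$ direction, given two equilibria $e'$, $e''$ I need to build $e\in NE(G_2)$ with $e_a=e'_a$ and $e_b=e''_b$. If $a$ and $b$ have different parities (which covers $|a-b|=1$ entirely, and also the odd cases of $|a-b|>2$), I would simply set $e_k=e'_k$ when $k\equiv a\pmod 2$ and $e_k=e''_k$ when $k\equiv b\pmod 2$; by the parity-decoupling observation, each of the two chains is already an equilibrium chain taken from $e'$ or $e''$, so Lemma~\ref{g2ne} applies. If $a$ and $b$ have the same parity and $|a-b|>2$, then $b-a\ge 4$, so the number of same-parity positions from $a$ to $b$ is $(b-a)/2+1\ge 3$. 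On the opposite parity I would just copy $e'_k$. On the shared parity, I would invoke Lemma~\ref{little lemma} with $n=3$, $u=e'_a$, $v=e''_b$, and $k=(b-a)/2+1\ge n$, obtaining a sequence $z_1,\dots,z_k$ in $\mathbb{Z}_3$ with $z_1=e'_a$, $z_k=e''_b$, and consecutive differences in $\{[0]_3,[1]_3\}$. I would set $e_{a+2i}=z_{i+1}$ for $0\le i\le (b-a)/2$, extend by the constant $z_1$ to all same-parity indices $<a$, and by the constant $z_k$ to all same-parity indices $>b$.

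The main obstacle, aside from bookkeeping, is the same-parity subcase of $(\Leftarrow)$: this is exactly where the hypothesis $|a-b|>2$ is used, to ensure the chain length $(b-a)/2+1$ meets the threshold $n=3$ required by Lemma~\ref{little lemma}. Everything else is an exercise in parity-decomposition and constant extension, paralleling the $(\Leftarrow)$ argument in Theorem~\ref{g1 theorem}.
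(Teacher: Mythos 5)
Your proposal is correct and follows essentially the same route as the paper: the same pair of constant counterexample profiles for the forward direction, and for the converse the same two tools --- parity interleaving of the two given equilibria and Lemma~\ref{little lemma} with constant extension --- the only difference being that you organize the cases by the parity of $a-b$ rather than by $|a-b|=1$ versus $|a-b|>2$, which lets the interleaving trick absorb all odd distances and confines the chain construction to the even case. This is a harmless (arguably slightly cleaner) reorganization of the paper's argument, not a different method.
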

\begin{proof}
Without loss of generality, suppose that $a\le b$.

\noindent$(\Rightarrow)$ First, assume that $G_2\vDash a\parallel b$ and $a=b$. Consider strategy profiles $e'=\langle e'_k\rangle_{k\in \mathbb Z}$ and $e'=\langle e''_k\rangle_{k\in \mathbb Z}$ such that $e'_k = [0]_3$ and $e''_k = [1]_3$ for each $k\in \mathbb Z$. Note that $e',e''\in NE(G_2)$ by Lemma~\ref{g2ne}. Thus, by the assumption $G_2\vDash a\parallel b$, there must exist $e=\langle e_k\rangle_{k\in\mathbb Z}$ such that $e_a=e'_a=[0]_3$ and $e_b=e''_b=[1]_3$. Hence, because $a=b$, we have $[0]_3=e_a=e_b=[1]_3$, which is a contradiction.

Next, suppose that $G_2\vDash a\parallel b$ and $b = a + 2$.  Consider strategy profiles $e'=\langle e'_k\rangle_{k\in \mathbb Z}$ and $e'=\langle e''_k\rangle_{k\in \mathbb Z}$ such that $e'_k = [0]_3$ and $e''_k = [2]_3$ for each $k\in \mathbb Z$. Note that $e',e''\in NE(G_2)$ by Lemma~\ref{g2ne}. Thus, by the assumption $G_2\vDash a\parallel b$, there must exist $e=\langle e_k\rangle_{k\in\mathbb Z}$ such that $e_a=e'_a=[0]_3$ and $e_{a+2}=e_b=e''_b=[2]_3$, which is a contradiction to Lemma~\ref{g2ne}.

\noindent$(\Leftarrow)$ Assume now that $f=\langle f_k\rangle_{k\in\mathbb Z}$ and  $g=\langle g_k\rangle_{k\in\mathbb Z}$ are two Nash equilibria of the game $G_2$. We need to show that there is an equilibrium $f=\langle f_k\rangle_{k\in\mathbb Z}$ of the same game $G_2$ such that $e_a=f_a$ and $e_b=eg_b$. Note that by Lemma~\ref{g2ne}, $f_k,g_k\in\mathbb Z_3$ for each $k\in\mathbb Z$ and
\begin{equation}\label{e'}
f_{k+2}-f_{k}\in\{[0]_3,[1]_3\},
\end{equation}
\begin{equation}\label{e''}
g_{k+2}-g_{k}\in\{[0]_3,[1]_3\},
\end{equation}
 for each $k\in\mathbb Z$.
We will consider two separate cases: $b=a+1$ and $b>a+2$.

\begin{figure}[ht]
\begin{center}
\vspace{3mm}
\scalebox{.7}{\includegraphics{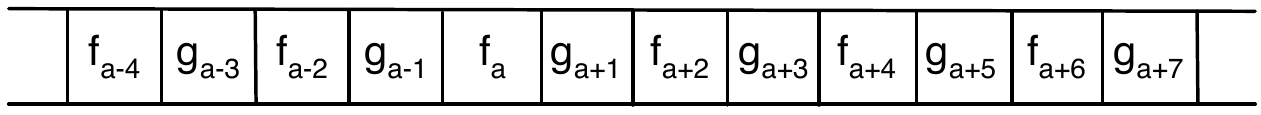}}
\vspace{0mm}
\caption{Nash equilibrium $e=\langle e_k\rangle_{k\in\mathbb Z}$.}\label{mainG2prop}
\vspace{0cm}
\end{center}
\vspace{0cm}
\end{figure}
\noindent{\em Case I}. Suppose that $b=a+1$. Consider (see Figure~\ref{mainG2prop}) strategy profile $e=\langle e_k\rangle_{k\in\mathbb Z}$ such that
$$
e_k =
\begin{cases}
f_k & \mbox{ if $k\equiv a \pmod 2$},\\
g_k & \mbox{ if $k\equiv a+1 \pmod 2$}.
\end{cases}
$$
Note that $e_{k+2}-e_{k}\in\{[0]_3,[1]_3\}$ for each $k\in\mathbb Z$ due statements (\ref{e'}) and (\ref{e''}). Thus, by Lemma~\ref{g2ne}, strategy profile $e$ is a Nash equilibrium of game $G_2$. Note that $e_a=f_a$ and $e_b=e_{a+1}=g_{a+1}=g_b$.

\noindent{\em Case II}. Assume now that $b>a+2$. Thus, $b-a>2$. Hence, by Lemma~\ref{little lemma}, there must exists $z_a,z_{a+1},\dots,z_b\in\mathbb Z_3$ such that $z_a=f_a$, $z_b=g_b$, and $z_{k+2}-z_k\in \{[0]_3,[1]_3\}$ for each $k$ such that $a\le k\le b-2$. Consider strategy profile $e=\langle e_k\rangle_{k\in\mathbb Z}$ such that
$$
e_k =
\begin{cases}
z_a & \mbox{ if $k< a$},\\
z_k & \mbox{ if $a\le k\le b$},\\
z_b & \mbox{ if $b<k$}.
\end{cases}
$$
By Lemma~\ref{g2ne}, strategy profile $e$ is a Nash equilibrium of the game $G_2$. Note that $e_a=z_a=f_a$ and $e_b=z_b=g_b$, by the choice of the sequence $z_a,z_{a+1},\dots,z_b$.
\end{proof}

\subsection{Game $G_n$: general case}\label{gn}

In this section we define game $G_n$ for $n\ge 3$. The set of strategies $S^n$ of the game $G_n$ is $(\mathbb Z_{n+1} \times \mathbb Z_{n+1})^{n-1}$. We visually represent elements of $S^n$ as $(n-1)\times 2$ matrices whose elements belong to $\mathbb Z_{n+1}$.

\begin{definition}\label{gn}
Pay-off function
\begin{equation}\label{u}
u\left(\left(
\begin{array}{ll}
x_{1,1} & x_{1,2}\\
x_{2,1} & x_{2,2} \\
\vdots & \vdots  \\
x_{n-1,1} & x_{n-1,2} 
\end{array}
\right),
\left(
\begin{array}{ll}
y_{1,1} & y_{1,2}  \\
y_{2,1} & y_{2,2} \\
\vdots & \vdots \\
y_{n-1,1} & y_{n-1,2} 
\end{array}
\right),
\left(
\begin{array}{ll}
z_{1,1} & z_{1,2} \\
z_{2,1} & z_{2,2} \\
\vdots & \vdots  \\
z_{n-1,1} & z_{n-1,2} 
\end{array}
\right)\right)
\end{equation}
 is equal to 1 if the following conditions are satisfied:
\begin{enumerate}
\item $y_{1,1}=[0]_{n+1}$,
\item $y_{k+1,2}+z_{k+1,1}-x_{k,2}-y_{k,1} \in \{[0]_{n+1},[1]_{n+1}\}$,
for every $1\le k<n-1$, 
\item $z_{1,2} - x_{n-1,2}-y_{n-1,1}\in\{[0]_{n+1},[1]_{n+1}\}$.
\end{enumerate}
if at least one of the above conditions is not satisfied, then pay-off function (\ref{u}) is equal to 0.
\end{definition}

\subsubsection{Perfect strategy profiles}

While describing properties of game $G_n$, it will be convenient to use terms ``perfect strategy profile" and ``semi-perfect strategy profile" at a particular player. We introduce the notion of a perfect profile in this section and the notion of a semi-perfect profile in the next section. 

\begin{definition}\label{perfect}
Strategy profile 
$$\scaleleftright[2ex]{\langle}{\left(
\begin{array}{ll}
x^i_{1,1} & x^i_{1,2} \\ \relax
x^i_{2,1} & x^i_{2,2} \\ \relax
\vdots & \vdots \\ \relax
x^i_{n-1,1} & x^i_{n-1,2} 
\end{array}
\right)}{\rangle}_{i\in \mathbb Z}$$
of the game $G_n$, where $n\ge 3$, is perfect at player $i$  if
\begin{enumerate}
\item $x^i_{1,1}=[0]_{n+1}$,
\item $x^{i-1}_{k,2}+x^i_{k,1} =x^i_{k+1,2}+x^{i+1}_{k+1,1}$ in $\mathbb Z_{n+1}$ for every $1\le k<n-1$,
\item  $x^{i-1}_{n-1,2}+x^i_{n-1,1}=x^{i+1}_{1,2}$ in $\mathbb Z_{n+1}$.
\end{enumerate}
\end{definition}

By a sum of two strategies in the game $G_n$, where $n\ge 3$, we mean element-wise sum of the two matrices. 
\begin{lemma}\label{perfect sum}
For any two strategy profiles $\langle s_i\rangle_{i\in\mathbb Z}$and $\langle s'_i\rangle_{i\in\mathbb Z}$, of the game $G_n$ perfect at a player $i\in\mathbb Z$, strategy profile $\langle s_i+s'_i\rangle_{i\in\mathbb Z}$ is also perfect at player $i$. \qed
\end{lemma}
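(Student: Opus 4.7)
The plan is entirely routine: the three defining conditions for a strategy profile being perfect at player $i$ (Definition~\ref{perfect}) are all linear equations over $\mathbb{Z}_{n+1}$, so they are preserved under element-wise addition of strategies. I would verify each condition in turn for the sum profile $\langle s_i+s'_i\rangle_{i\in\mathbb{Z}}$ by invoking the corresponding condition for $\langle s_i\rangle_{i\in\mathbb{Z}}$ and $\langle s'_i\rangle_{i\in\mathbb{Z}}$ separately, then adding.

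More concretely, write the entries of $s_j$ as $x^j_{k,\ell}$ and of $s'_j$ as $y^j_{k,\ell}$. For condition~(1), we have $x^i_{1,1}=[0]_{n+1}$ and $y^i_{1,1}=[0]_{n+1}$, so $x^i_{1,1}+y^i_{1,1}=[0]_{n+1}$. For condition~(2), fix $1\le k<n-1$; from perfectness of each summand at $i$ we have
\[
x^{i-1}_{k,2}+x^i_{k,1} \;=\; x^i_{k+1,2}+x^{i+1}_{k+1,1}, \qquad y^{i-1}_{k,2}+y^i_{k,1} \;=\; y^i_{k+1,2}+y^{i+1}_{k+1,1},
\]
and adding these two equations in $\mathbb{Z}_{n+1}$ yields the required identity for the sum profile. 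Condition~(3) is handled identically: add the two instances $x^{i-1}_{n-1,2}+x^i_{n-1,1}=x^{i+1}_{1,2}$ and $y^{i-1}_{n-1,2}+y^i_{n-1,1}=y^{i+1}_{1,2}$.

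There is no real obstacle here; the lemma is a direct consequence of the fact that the perfectness constraints, unlike the Nash equilibrium constraints from Definition~\ref{gn}, are pure equalities rather than membership conditions in $\{[0]_{n+1},[1]_{n+1}\}$. The only thing one has to be careful about is not to confuse this lemma with a statement about sums of equilibria, which would be false since the pay-off inequalities involve the set $\{[0]_{n+1},[1]_{n+1}\}$, which is not closed under addition.
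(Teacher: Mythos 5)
Your proof is correct and follows exactly the route the paper intends: the paper's entire proof is ``See Definition~\ref{perfect}.'', relying on the fact that all three perfectness conditions are linear equalities over $\mathbb{Z}_{n+1}$ and hence closed under element-wise addition, which is precisely what you verify in detail. Your closing remark distinguishing this from the (false) analogous statement for the semi-perfectness/equilibrium conditions, which involve membership in $\{[0]_{n+1},[1]_{n+1}\}$, is a correct and worthwhile observation.
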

\begin{proof}
See Definition~\ref{perfect}.
\end{proof}

\begin{lemma}\label{perfect exists}
For any 
$$
M =\left(
\begin{array}{ll}
[0] & x_{1,2} \\ \relax
x_{2,1} & x_{2,2} \\ \relax
x_{3,1} & x_{3,2} \\ \relax
\vdots & \vdots \\ \relax
x_{n-2,1} & x_{n-2,2} \\ \relax
x_{n-1,1} & x_{n-1,2} 
\end{array}
\right)
\in (\mathbb Z_{n+1} \times \mathbb Z_{n+1})^{n-1}
$$
and any $a,b\in \mathbb Z$, if  $0< |a-b| < n$, then there is a strategy profile $s=\langle s_i\rangle_{i\in\mathbb Z}$ of the game $G_n$ such that 
\begin{enumerate}
\item $s_a = M$,
\item $s_b =
\left(
\begin{array}{ll}
[0] & [0] \\ \relax
[0] & [0] \\ \relax
[0] & [0] \\ \relax
\vdots & \vdots \\ \relax
[0] & [0] \\ \relax
[0] & [0] 
\end{array}
\right),$
\item strategy profile $s$ is perfect at each player $i$ such that $a<i<b$.
\end{enumerate}
\end{lemma}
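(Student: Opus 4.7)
Without loss of generality assume $a < b$ (the case $a > b$ is handled by a symmetric backward cascade); set $d := b - a \in \{1, \ldots, n-1\}$. Put $s_a := M$, $s_b$ equal to the zero matrix, and $s_i$ equal to the zero matrix for $i \notin \{a, \ldots, b\}$ --- these outside values play no role in any perfection condition at an interior player. If $d = 1$ there are no interior players and we are done, so assume $d \ge 2$; the remaining task is to choose $s_{a+1}, \ldots, s_{b-1}$ so that perfection holds at each of $a+1, \ldots, b-1$.

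The key observation is that the three conditions of Definition~\ref{perfect} can be rearranged into an explicit forward-cascade recurrence: given $s_{i-1}$ and $s_i$, perfection at $i$ expresses $n-1$ specific entries of $s_{i+1}$ --- namely $x^{i+1}_{j, 1}$ for $j = 2, \ldots, n-1$ together with $x^{i+1}_{1, 2}$ --- as $\mathbb{Z}_{n+1}$-linear functions of the entries of $s_{i-1}$ and $s_i$, while $x^{i+1}_{1, 1}$ is forced to $[0]$ and the $n - 2$ entries $x^{i+1}_{j, 2}$ for $j \ge 2$ remain free. Choosing $s_{a+1}$ (with $x^{a+1}_{1, 1} = [0]$) and setting the free entries of $s_{a+2}, \ldots, s_{b-1}$ to $[0]$, the cascade produces a unique strategy profile, and the content of the lemma reduces to exhibiting an $s_{a+1}$ for which the cascade terminates with $s_b$ equal to the zero matrix. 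By Lemma~\ref{perfect sum}, the admissible $M$'s form a $\mathbb{Z}_{n+1}$-submodule, so it is enough to handle $M$ equal to each standard basis matrix $E_{k, l}$ with $(k, l) \ne (1, 1)$: when $l = 1$ the trivial profile $s_i$ equal to the zero matrix for $i \ne a$ works (since the perfection conditions reference only the second column of $s_{i-1}$); when $l = 2$ one picks an explicit $s_{a+1}$ whose single nonzero entry absorbs the propagating effect of $E_{k, 2}$ when the cascade reaches the right endpoint.

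The main obstacle, and the point where the hypothesis $|a - b| < n$ intervenes, is showing that the cascade terminates consistently at the zero matrix. Conditions (b), (c) of Definition~\ref{perfect} make the quantities $y^i_k := x^{i-1}_{k, 2} + x^i_{k, 1}$ constant along each diagonal $i - k = \text{const}$, and, via condition (a) at player $c+1$ and condition (c) at player $c + n - 1$, the two endpoints of a diagonal of length $n - 1$ are pinned to $x^c_{1, 2}$ and $x^{c + n}_{1, 2}$. Were some such diagonal to lie entirely within $\{a+1, \ldots, b-1\}$, perfection alone would force the identity $x^c_{1, 2} = x^{c + n}_{1, 2}$, which could contradict the prescribed boundary data. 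The condition $d < n$ precisely rules out any such fully-covered diagonal, so no enforced periodicity can obstruct the cascade, and the construction produces the required profile.
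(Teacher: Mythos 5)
Your cascade analysis of Definition~\ref{perfect} is accurate, and the reduction via Lemma~\ref{perfect sum} to basis matrices $E_{k,l}$ is legitimate, but the central claim --- that after choosing $s_{a+1}$ one may set every free entry of $s_{a+2},\dots,s_{b-1}$ to $[0]$ and still force $s_b=0$ --- is false, and it fails exactly on the case $M=E_{1,2}$ with $b-a\ge 3$. Your own diagonal invariant shows why: condition (2) of Definition~\ref{perfect} at player $a+m$ with $k=m$ forces $D_m:=x^{a+m}_{m+1,2}+x^{a+m+1}_{m+1,1}$ to satisfy $D_{m-1}=D_m$ for $1\le m\le b-a-1$, so $D_{b-a-1}=D_0=x^{a}_{1,2}+x^{a+1}_{1,1}=x^{a}_{1,2}$ (the entry $x^{a+1}_{1,1}$ is pinned to $[0]$ by perfection at $a+1$, so no choice of $s_{a+1}$ can alter this). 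Since $s_b=0$ forces $x^{b}_{b-a,1}=[0]$, the identity $D_{b-a-1}=x^{b-1}_{b-a,2}+x^{b}_{b-a,1}$ reads $x^{b-1}_{b-a,2}=x^{a}_{1,2}\neq[0]$; but for $b-a\ge 3$ the entry $x^{b-1}_{b-a,2}$ is one of the ``free'' second-column entries of $s_{b-1}$ that you have declared to be $[0]$. In other words, the obstruction is not only the fully-covered diagonal of length $n-1$ that you correctly rule out via $|a-b|<n$; the partially-covered diagonal emanating from $x^{a}_{1,2}$ already constrains a free entry at the far end, so the problem does not reduce to exhibiting a single matrix $s_{a+1}$.

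The repair is what the paper's explicit construction does: carry the value $x_{1,2}$ down the diagonal in the free second-column slots, setting $x^{a+j}_{j+1,2}=x_{1,2}$ for $1\le j\le b-a-1$ and keeping the corresponding first-column entries at $[0]$, so that the invariant $D_m=x_{1,2}$ is realized entirely inside $s_{a+1},\dots,s_{b-1}$ and nothing nonzero is deposited into $s_b$. (The other basis directions are handled essentially as you say: the first column of $M$ never enters any perfection condition at an interior player, and each $x_{k,2}$ with $k\ge 2$ is cancelled immediately by placing $-x_{k,2}$ in row $k$ of the first column of $s_{a+1}$.) With that correction your argument coincides with the paper's proof.
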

\begin{proof}
We assume that $a<b$. The case $b<a$ could be shown in a similar way. 

\noindent{\em Case $b=a+1$:} Consider strategy profile $s=\langle s_i\rangle_{i\in\mathbb Z}$ such that $s_a=M$ and all other strategy are equal to 
$$ \left(
\begin{array}{ll}
[0] & [0] \\ \relax
[0] & [0] \\ \relax
[0] & [0] \\ \relax
\vdots & \vdots \\ \relax
[0] & [0] \\ \relax
[0] & [0] 
\end{array}
\right).$$
The third condition of the lemma is satisfied vacuously.

\noindent{\em Case $a+1<b<a+n$:} Consider strategy profile $s=\langle s_i\rangle_{i\in\mathbb Z}$ such that strategies $s_a,s_{a+1},\dots,s_b$ are equal to 
$$\left(
\begin{array}{ll}
[0] & x_{1,2}  \\ \relax
x_{2,1} & x_{2,2}  \\ \relax
x_{3,1} & x_{3,2}  \\ \relax
x_{4,1} & x_{4,2}  \\ \relax
\vdots & \vdots \\ \relax
x_{k-1,1} & x_{k-1,2}  \\ \relax
x_{k,1} & x_{k,2}  \\ \relax
x_{k+1,1} & x_{k+1,2}  \\ \relax
x_{k+2,1} & x_{k+2,2}  \\ \relax
\vdots & \vdots \\ \relax
x_{n-3,1} & x_{n-3,2} \\ \relax
x_{n-2,1} & x_{n-2,2} \\ \relax
x_{n-1,1} & x_{n-1,2}
\end{array}
\right),
\left(
\begin{array}{ll}
[0] & [0]  \\ \relax
-x_{2,2} & x_{1,2}  \\ \relax
-x_{3,2} & [0]  \\ \relax
-x_{4,2} & [0]  \\ \relax
\vdots & \vdots \\ \relax
-x_{k-1,2} & [0]  \\ \relax
-x_{k,2} & [0]  \\ \relax
-x_{k+1,2} & [0]  \\ \relax
-x_{k+2,2} & [0]  \\ \relax
\vdots & \vdots \\ \relax
-x_{n-3,2} & [0] \\ \relax
-x_{n-2,2} & [0] \\ \relax
-x_{n-1,2} & [0] 
\end{array}
\right),
\left(
\begin{array}{ll}
[0] & [0]  \\ \relax
[0] & [0]  \\ \relax
[0] & x_{1,2}  \\ \relax
[0] & [0]  \\ \relax
\vdots & \vdots \\ \relax
[0] & [0]  \\ \relax
[0] & [0]  \\ \relax
[0] & [0]  \\ \relax
[0] & [0]  \\ \relax
\vdots & \vdots \\ \relax
[0] & [0]  \\ \relax
[0] & [0]  \\ \relax
[0] & [0] 
\end{array}
\right), 
$$
$$
\dots,
\left(
\begin{array}{ll}
[0] & [0]  \\ \relax
[0] & [0]  \\ \relax
[0] & [0]  \\ \relax
[0] & [0]  \\ \relax
\vdots & \vdots \\ \relax
[0] & x_{1,2}  \\ \relax
[0] & [0]  \\ \relax
[0] & [0]  \\ \relax
[0] & [0]  \\ \relax
\vdots & \vdots \\ \relax
[0] & [0]  \\ \relax
[0] & [0]  \\ \relax
[0] & [0] 
\end{array}
\right),
\left(
\begin{array}{ll}
[0] & [0]  \\ \relax
[0] & [0]  \\ \relax
[0] & [0]  \\ \relax
[0] & [0]  \\ \relax
\vdots & \vdots \\ \relax
[0] & [0]  \\ \relax
[0] & x_{1,2}  \\ \relax
[0] & [0]  \\ \relax
[0] & [0]  \\ \relax
\vdots & \vdots \\ \relax
[0] & [0]  \\ \relax
[0] & [0]  \\ \relax
[0] & [0] 
\end{array}
\right),
\left(
\begin{array}{ll}
[0] & [0]  \\ \relax
[0] & [0]  \\ \relax
[0] & [0]  \\ \relax
[0] & [0]  \\ \relax
\vdots & \vdots \\ \relax
[0] & [0]  \\ \relax
[0] & [0]  \\ \relax
[0] & [0]  \\ \relax
[0] & [0]  \\ \relax
\vdots & \vdots \\ \relax
[0] & [0]  \\ \relax
[0] & [0]  \\ \relax
[0] & [0] 
\end{array}
\right)
$$
respectively, where $k=b-a$. All other strategies are equal to
$$ \left(
\begin{array}{ll}
[0] & [0]  \\ \relax
[0] & [0]  \\ \relax
[0] & [0]  \\ \relax
[0] & [0]  \\ \relax
\vdots & \vdots \\ \relax
[0] & [0]  \\ \relax
[0] & [0]  \\ \relax
[0] & [0]  \\ \relax
[0] & [0]  \\ \relax
\vdots & \vdots \\ \relax
[0] & [0]  \\ \relax
[0] & [0]  \\ \relax
[0] & [0] 
\end{array}
\right).$$
By Definition~\ref{perfect}, this profile is perfect at each player $i$ such that $a<i<b$.
\end{proof}

\subsubsection{Semi-perfect strategy profiles}

\begin{definition}\label{semi-perfect}
Strategy profile
$$\scaleleftright[2ex]{\langle}{\left(
\begin{array}{ll}
x^i_{1,1} & x^i_{1,2} \\ \relax
x^i_{2,1} & x^i_{2,2} \\ \relax
\vdots & \vdots \\ \relax
x^i_{n-1,1} & x^i_{n-1,2} 
\end{array}
\right)}{\rangle}_{i\in \mathbb Z}$$
of the game $G_n$, where $n\ge 2$, is semi-perfect at player $i$ if
\begin{enumerate}
\item $x^i_{1,1}=[0]_{n+1}$,\label{semi-perfect-line-1}
\item  $x^i_{k+1,2}+x^{i+1}_{k+1,1}-x^{i-1}_{k,2}-x^i_{k,1}\in\{[0]_{n+1},[1]_{n+1}\}$, for every $1\le k<n-1$,\label{semi-perfect-line-2}
\item $x^{i+1}_{1,2} - x^{i-1}_{n-1,2}-x^i_{n-1,1}\in\{[0]_{n+1},[1]_{n+1}\}$.\label{semi-perfect-line-3}
\end{enumerate}
\end{definition}

\begin{lemma}\label{semi-perfect exists}
For any 
$$
A =\left(
\begin{array}{ll}
[0] & x_{1,2} \\ \relax
x_{2,1} & x_{2,2} \\ \relax
\vdots & \vdots \\ \relax
x_{n-1,1} & x_{n-1,2} 
\end{array}
\right)
\in (\mathbb Z_{n+1} \times \mathbb Z_{n+1})^{n-1},
$$
any 
$$
B =\left(
\begin{array}{ll}
[0] & y_{1,2} \\ \relax
y_{2,1} & y_{2,2} \\ \relax
\vdots & \vdots \\ \relax
y_{n-1,1} & y_{n-1,2} 
\end{array}
\right)
\in (\mathbb Z_{n+1} \times \mathbb Z_{n+1})^{n-1},
$$
and any $a,b\in \mathbb Z$, if  $|a-b|>n$, then there is a strategy profile $s=\langle s_i\rangle_{i\in\mathbb Z}$ of the game $G_n$ such that 
\begin{enumerate}
\item $s_a = A$,
\item $s_b = B$,
\item strategy profile $s$ is semi-perfect at each player $i$ such that $a<i<b$.
\end{enumerate}
\end{lemma}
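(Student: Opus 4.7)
Without loss of generality assume $a<b$, so that $b-a\ge n+1$. The plan is to construct the strategy profile $s$ explicitly: set $s_i$ to the zero matrix for $i<a$ and $i>b$, set $s_a=A$ and $s_b=B$, and define $s_i$ for $a<i<b$ as described below.

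The construction extends the cascade technique from the proof of Lemma~\ref{perfect exists}, but exploits the freedom granted by semi-perfect, where each equality in Definition~\ref{perfect} is relaxed to a $\{[0]_{n+1},[1]_{n+1}\}$-containment in Definition~\ref{semi-perfect}. Writing $x^i_{k,\ell}$ for the entries of $s_i$ and $A=(\alpha_{k,\ell})$, $B=(\beta_{k,\ell})$, introduce the auxiliary sums $P_i^k=x^i_{k,2}+x^{i+1}_{k,1}$ for $1\le k\le n-1$ together with $P_i^n:=x^{i+1}_{1,2}$. The top-left-zero condition $x^{i}_{1,1}=[0]_{n+1}$ at interior players yields the identification $P_i^n=P_{i+1}^1$, which glues the $P$-values into chains, and in terms of these sums the semi-perfect conditions 2 and 3 at player $i$ take the uniform chain form $P_i^{k+1}-P_{i-1}^k\in\{[0]_{n+1},[1]_{n+1}\}$ for $k=1,\dots,n-1$.

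At the boundaries only two $P$-values are rigidly prescribed by $A$ and $B$: $P_a^1=\alpha_{1,2}$ and $P_{b-1}^n=\beta_{1,2}$. All other $P$'s at columns $a$ and $b-1$ can be set freely by choosing the interior free entries $x^{a+1}_{k,1}$, $x^{b-1}_{k,2}$, and $x^{a+1}_{1,2}$ appropriately. The $P$-values partition into chains of length $b-a\ge n+1$, so Lemma~\ref{little lemma} in $\mathbb{Z}_{n+1}$ lets us fill each chain consistently with the prescribed values via $\{[0]_{n+1},[1]_{n+1}\}$-increments; the chains with no prescribed endpoints can be filled with constant sequences. From the chosen $P$-values one then recovers the matrix entries of each $s_i$ by a left-to-right sweep, setting the free summand $x^i_{k,2}$ to $[0]_{n+1}$ for interior $i$ and $k\ge 2$ so that $x^{i+1}_{k,1}$ is forced. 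The main obstacle is the consistency bookkeeping: verifying that the recovered profile actually equals $B$ at the right endpoint, that $x^i_{1,1}=[0]_{n+1}$ holds for all interior $i$, and that the $P$-values chosen via Lemma~\ref{little lemma} yield well-defined matrix entries. The hypothesis $b-a>n$ is used precisely to make Lemma~\ref{little lemma} applicable.
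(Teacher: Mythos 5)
Your $P$-sum reformulation is correct and is, at bottom, the same construction the paper uses: writing $P_i^k=x^i_{k,2}+x^{i+1}_{k,1}$ and $P_i^n=x^{i+1}_{1,2}$ does turn conditions 2 and 3 of Definition~\ref{semi-perfect} into the uniform diagonal-chain condition $P_i^{k+1}-P_{i-1}^k\in\{[0]_{n+1},[1]_{n+1}\}$, and the paper's explicit display is exactly a profile in which the single chain through $P_a^1$ carries a Lemma~\ref{little lemma} sequence $z_1,\dots,z_{b-a}$ while the cancellation columns $(-x_{j,2})$ and $(-y_{j,1})$ placed next to $A$ and $B$ flatten every other chain. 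You also correctly isolate the only two rigidly forced values, $P_a^1=x_{1,2}$ and $P_{b-1}^n=y_{1,2}$.

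The gap is precisely the ``consistency bookkeeping'' you defer, and it is not routine. First, the claim that the $P$-values partition into chains of length $b-a$ is false: because the wrap $P_i^n=P_{i+1}^1$ advances the column by two in reduced coordinates, a chain spanning columns $a$ through $b-1$ has only about $(b-a)(n-1)/n$ elements, so the applicability of Lemma~\ref{little lemma} (which needs at least $n+1$ terms in $\mathbb Z_{n+1}$) does not follow from $b-a>n$ by the count you give. Second, and more importantly, you never decide whether $P_a^1$ and $P_{b-1}^n$ lie on the \emph{same} chain; tracing the wraps shows they do exactly when $n$ divides $b-a$, in which case $b-a\ge 2n$ and the connecting segment has at least $2n-1\ge n+1$ elements, while otherwise each of the two forced values sits on its own chain, which can then be filled by a constant sequence. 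This dichotomy, together with the degenerate sub-case in which the diagonal would terminate in the forced entry $y_{1,1}=[0]_{n+1}$ and must be rerouted, is where the paper spends its effort (its case analysis on $k\equiv b-a\pmod{n-1}$ and the separate treatment of $k=1$). Your framework can absorb all of this cleanly, but until the chain-incidence analysis and the corrected length count are written out, the argument is a plan rather than a proof.
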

\begin{proof}
We will assume that $a<b$. The other case is similar. Thus, $b-a>n$ due to the assumption $|a-b|>n$. 
Let $k$ be an integer such that $0 \le k< n-1$ and $k\equiv b - a\pmod{n-1}$. 
We first consider case when $k\neq 1$. 
Since $x_{1,2},y_{k,1}\in{\mathbb Z}_{n+1}$, by Lemma~\ref{little lemma}, there must exist a sequence  $z_1,z_2,\dots,z_{b-a}$ of equivalence classes in $\mathbb Z_{n+1}$ such that 
\begin{align}
& z_1=x_{1,2},\notag\\
& z_{b-a}=y_{k,1},\label{y rule}\\
& z_{i+1}-z_i\in\{[0]_{n+1},[1]_{n+1}\}  \hspace{1cm} (0\le i< b - a).\label{z rule}
\end{align} 
Consider now strategy profile $s=\langle s_i\rangle_{i\in\mathbb Z}$ such that strategies $s_a,s_{a+1},\dots,s_b$ are equal to 
$$\left(
\begin{array}{ll}
[0] & z_1  \\ \relax
x_{2,1} & x_{2,2}  \\ \relax
x_{3,1} & x_{3,2}  \\ \relax
x_{4,1} & x_{4,2}  \\ \relax
\vdots & \vdots \\ \relax
x_{k-1,1} & x_{k-1,2}  \\ \relax
x_{k,1} & x_{k,2}  \\ \relax
x_{k+1,1} & x_{k+1,2}  \\ \relax
\vdots & \vdots \\ \relax
x_{n-3,1} & x_{n-3,2} \\ \relax
x_{n-2,1} & x_{n-2,2} \\ \relax
x_{n-1,1} & x_{n-1,2}
\end{array}
\right),
\left(
\begin{array}{ll}
[0] & [0]  \\ \relax
-x_{2,2} & [0]  \\ \relax
-x_{3,2} & [0]  \\ \relax
-x_{4,2} & [0]  \\ \relax
\vdots & \vdots \\ \relax
-x_{k-1,2} & [0]  \\ \relax
-x_{k,2} & [0]  \\ \relax
-x_{k+1,2} & [0]  \\ \relax
\vdots & \vdots \\ \relax
-x_{n-3,2} & [0] \\ \relax
-x_{n-2,2} & [0] \\ \relax
-x_{n-1,2} & [0] 
\end{array}
\right),
\left(
\begin{array}{ll}
[0] & [0]  \\ \relax
z_2 & [0]  \\ \relax
[0] & [0]  \\ \relax
[0] & [0]  \\ \relax
\vdots & \vdots \\ \relax
[0] & [0]  \\ \relax
[0] & [0]  \\ \relax
[0] & [0]  \\ \relax
\vdots & \vdots \\ \relax
[0] & [0]  \\ \relax
[0] & [0]  \\ \relax
[0] & [0] 
\end{array}
\right),
\left(
\begin{array}{ll}
[0] & [0]  \\ \relax
[0] & [0]  \\ \relax
z_3 & [0]  \\ \relax
[0] & [0]  \\ \relax
\vdots & \vdots \\ \relax
[0] & [0]  \\ \relax
[0] & [0]  \\ \relax
[0] & [0]  \\ \relax
\vdots & \vdots \\ \relax
[0] & [0]  \\ \relax
[0] & [0]  \\ \relax
[0] & [0] 
\end{array}
\right)
,\dots,
$$
$$
\left(
\begin{array}{ll}
[0] & [0]  \\ \relax
[0] & [0]  \\ \relax
[0] & [0]  \\ \relax
[0] & [0]  \\ \relax
\vdots & \vdots \\ \relax
[0] & [0]  \\ \relax
[0] & [0]  \\ \relax
[0] & [0]  \\ \relax
\vdots & \vdots \\ \relax
[0] & [0]  \\ \relax
z_{n-2} & [0]  \\ \relax
[0] & [0] 
\end{array}
\right),
\left(
\begin{array}{ll}
[0] & [0]  \\ \relax
[0] & [0]  \\ \relax
[0] & [0]  \\ \relax
[0] & [0]  \\ \relax
\vdots & \vdots \\ \relax
[0] & [0]  \\ \relax
[0] & [0]  \\ \relax
[0] & [0]  \\ \relax
\vdots & \vdots \\ \relax
[0] & [0]  \\ \relax
[0] & [0]  \\ \relax
z_{n-1} & [0] 
\end{array}
\right),
\left(
\begin{array}{ll}
[0] & z_n  \\ \relax
[0] & [0]  \\ \relax
[0] & [0]  \\ \relax
[0] & [0]  \\ \relax
\vdots & \vdots \\ \relax
[0] & [0]  \\ \relax
[0] & [0]  \\ \relax
[0] & [0]  \\ \relax
\vdots & \vdots \\ \relax
[0] & [0]  \\ \relax
[0] & [0]  \\ \relax
[0] & [0] 
\end{array}
\right),
\left(
\begin{array}{ll}
[0] & [0]  \\ \relax
[0] & [0]  \\ \relax
[0] & [0]  \\ \relax
[0] & [0]  \\ \relax
\vdots & \vdots \\ \relax
[0] & [0]  \\ \relax
[0] & [0]  \\ \relax
[0] & [0]  \\ \relax
\vdots & \vdots \\ \relax
[0] & [0]  \\ \relax
[0] & [0]  \\ \relax
[0] & [0] 
\end{array}
\right),
\left(
\begin{array}{ll}
[0] & [0]  \\ \relax
z_{n+1} &  [0] \\ \relax
[0] & [0]  \\ \relax
[0] & [0]  \\ \relax
\vdots & \vdots \\ \relax
[0] & [0]  \\ \relax
[0] & [0]  \\ \relax
[0] & [0]  \\ \relax
\vdots & \vdots \\ \relax
[0] & [0]  \\ \relax
[0] & [0]  \\ \relax
[0] & [0] 
\end{array}
\right),
$$
$$
\left(
\begin{array}{ll}
[0] & [0]  \\ \relax
[0] &  [0] \\ \relax
z_{n+2} & [0]  \\ \relax
[0] & [0]  \\ \relax
\vdots & \vdots \\ \relax
[0] & [0]  \\ \relax
[0] & [0]  \\ \relax
[0] & [0]  \\ \relax
\vdots & \vdots \\ \relax
[0] & [0]  \\ \relax
[0] & [0]  \\ \relax
[0] & [0] 
\end{array}
\right),
\dots,
\left(
\begin{array}{ll}
[0] & [0]  \\ \relax
[0] &  -y_{2,1} \\ \relax
[0] & -y_{3,1}  \\ \relax
[0] & -y_{4,1}  \\ \relax
\vdots & \vdots \\ \relax
z_{b-a-1} & -y_{k-1,1}  \\ \relax
[0] & [0]  \\ \relax
[0] & -y_{k+1,1}  \\ \relax
\vdots & \vdots \\ \relax
[0] & -y_{n-3,1}  \\ \relax
[0] & -y_{n-2,1}  \\ \relax
[0] & -y_{n-1,1} 
\end{array}
\right),
\left(
\begin{array}{ll}
[0] & y_{1,2}  \\ \relax
y_{2,1} & y_{2,2}  \\ \relax
y_{3,1} & y_{3,2}  \\ \relax
y_{4,1} & y_{4,2}  \\ \relax
\vdots & \vdots \\ \relax
y_{k-1,1} & y_{k-1,2}  \\ \relax
z_{b-a} & y_{k,2}  \\ \relax
y_{k+1,1} & y_{k+1,2}  \\ \relax
\vdots & \vdots \\ \relax
y_{n-3,1} & y_{n-3,2} \\ \relax
y_{n-2,1} & y_{n-2,2} \\ \relax
y_{n-1,1} & y_{n-1,2}
\end{array}
\right)
$$
respectively. All other strategies are equal 
$$ \left(
\begin{array}{ll}
[0] & [0]  \\ \relax
[0] & [0]  \\ \relax
[0] & [0]  \\ \relax
[0] & [0]  \\ \relax
\vdots & \vdots \\ \relax
[0] & [0]  \\ \relax
[0] & [0]  \\ \relax
[0] & [0]  \\ \relax
\vdots & \vdots \\ \relax
[0] & [0]  \\ \relax
[0] & [0]  \\ \relax
[0] & [0] 
\end{array}
\right).$$
Due to condition~(\ref{z rule}), this profile is semi-perfect at each player $i$ such that $a<i<b$. Case $k=1$ is similar except that equation  (\ref{y rule}) should be replaced with $z_{b-a}=y_{k,2}$.
\end{proof}

\begin{lemma}[right expansion]\label{semi-perfect right expansion}
For each $a,b\in \mathbb Z$ such that $a<b$, if strategy profile $s=\langle s_i\rangle_{i\in\mathbb Z}$ is semi-perfect for each player $i$ such that $a<i<b$, then there is a strategy profile $s'=\langle s'_i\rangle_{i\in\mathbb Z}$ such that
\begin{enumerate}
\item $s'_i=s_i$ for each $i$ such that $a\le i\le b$,
\item $s'$ is semi-perfect for each player $i$ such that $a<i<b+1$.
\end{enumerate}
\end{lemma}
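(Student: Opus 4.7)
The plan is to leave $s'_i = s_i$ for every $i \le b$, modify only the single strategy $s'_{b+1}$ so as to make the semi-perfectness condition hold at player $b$, and set $s'_i$ for $i > b+1$ to any default value (say the all-zero matrix), since those entries do not affect the conditions we need to verify.

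The key observation is that semi-perfectness at player $b$ in Definition~\ref{semi-perfect} refers only to $s'_{b-1}, s'_b, s'_{b+1}$. Condition~(\ref{semi-perfect-line-1}), i.e.\ $x^b_{1,1}=[0]_{n+1}$, depends only on $s_b$ and is already guaranteed by the conventions under which this lemma will be applied (every strategy appearing in our constructions carries $[0]$ in its upper-left entry). Conditions~(\ref{semi-perfect-line-2}) and~(\ref{semi-perfect-line-3}) constrain the entries $x^{b+1}_{k+1,1}$ for $1 \le k < n-1$ and the single entry $x^{b+1}_{1,2}$, respectively. Since the positions $(2,1),(3,1),\dots,(n-1,1)$ and $(1,2)$ of the matrix $s'_{b+1}$ are pairwise disjoint, each of these unknowns appears in exactly one constraint and can be solved for independently.

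Concretely, I would define
\[
x^{b+1}_{k+1,1} := x^{b-1}_{k,2} + x^b_{k,1} - x^b_{k+1,2} \qquad (1\le k < n-1),
\]
so that the expression in condition~(\ref{semi-perfect-line-2}) at player $b$ evaluates to $[0]_{n+1}$, and
\[
x^{b+1}_{1,2} := x^{b-1}_{n-1,2} + x^b_{n-1,1},
\]
so that condition~(\ref{semi-perfect-line-3}) evaluates to $[0]_{n+1}$. I would then set $x^{b+1}_{1,1} := [0]_{n+1}$ and let $s'_i$ be the all-zero matrix for every $i > b+1$.

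Verification is immediate. Part (1) holds by construction. For part (2), if $a < i < b$ then $i-1\ge a$ and $i+1 \le b$, so $s'_{i-1}, s'_i, s'_{i+1}$ coincide with $s_{i-1}, s_i, s_{i+1}$ and the semi-perfect conditions at $i$ transfer from the hypothesis; for $i = b$, they hold by the explicit choice of $s'_{b+1}$. There is no real obstacle here — the lemma simply records the local, triangular nature of the defining equations for semi-perfectness, which allows them to be satisfied one position at a time by freely choosing the new rightmost strategy.
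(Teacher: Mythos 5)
Your proof is correct and essentially identical to the paper's: the paper likewise keeps $s'_i=s_i$ for all $i\neq b+1$ and defines $s'_{b+1}$ with first-column entries $x^{b-1}_{k,2}+x^{b}_{k,1}-x^{b}_{k+1,2}$ in rows $2,\dots,n-1$ and $(1,2)$-entry $x^{b-1}_{n-1,2}+x^{b}_{n-1,1}$, making each defect in conditions (2) and (3) at player $b$ equal to $[0]_{n+1}$. The only point worth noting is that you explicitly flag the assumption $x^{b}_{1,1}=[0]_{n+1}$, which the paper's proof also relies on but leaves implicit in its display of $s_b$.
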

\begin{proof}
Let 
$$s_{b-1}=
\left(
\begin{array}{ll}
[0] & x_{1,2} \\ \relax
x_{2,1} & x_{2,2} \\ \relax
x_{3,1} & x_{3,2} \\ \relax
\vdots & \vdots \\ \relax
x_{n-2,1} & x_{n-2,2} \\ \relax
x_{n-1,1} & x_{n-1,2} 
\end{array}
\right),
\hspace{5mm}
s_{b}=
\left(
\begin{array}{ll}
[0] & y_{1,2} \\ \relax
y_{2,1} & y_{2,2} \\ \relax
y_{3,1} & y_{3,2} \\ \relax
\vdots & \vdots \\ \relax
y_{n-2,1} & y_{n-2,2} \\ \relax
y_{n-1,1} & y_{n-1,2} 
\end{array}
\right).
$$
Define $s'_i$ to be equal to $s_i$ for all $i\neq b+1$ and $s'_{b+1}$ to be 
$$
\left(
\begin{array}{ll}
[0] & x_{n-1,2}+y_{n-1,1} \\ \relax
x_{1,2}-y_{2,2} & [0] \\ \relax
x_{2,2}+y_{2,1}-y_{3,2} & [0] \\ \relax
\vdots & \vdots \\ \relax
x_{n-3,2}+y_{n-3,1}-y_{n-2,2} & [0] \\ \relax
x_{n-2,2}+y_{n-2,1}-y_{n-1,2} & [0] 
\end{array}
\right).
$$
By Definition~\ref{semi-perfect}, strategy profile $s'$ is perfect at player $b$.
\end{proof}

\begin{lemma}[left expansion]\label{semi-perfect left expansion}
For each $a,b\in \mathbb Z$ such that $a<b$, if strategy profile $s=\langle s_i\rangle_{i\in\mathbb Z}$ is semi-perfect for each player $i$ such that $a<i<b$, then there is a strategy profile $s'=\langle s'_i\rangle_{i\in\mathbb Z}$ such that
\begin{enumerate}
\item $s'_i=s_i$ for each $i$ such that $a\le i\le b$,
\item $s'$ is semi-perfect for each player $i$ such that $a-1<i<b$.
\end{enumerate}
\end{lemma}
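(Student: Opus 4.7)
The plan is to mirror Lemma~\ref{semi-perfect right expansion}, this time constructing a new left neighbor $s'_{a-1}$ so as to enforce semi-perfection at player $a$.

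First I would set $s'_i = s_i$ for every $i \neq a-1$. Each semi-perfect condition at a player $i$ with $a < i < b$ references only $s_i$ and the two immediate neighbors $s_{i\pm 1}$, and for such $i$ none of these equals $s'_{a-1}$, so every assumed semi-perfect condition is automatically preserved. Thus the only new condition to arrange is semi-perfection at player $a$, which is a constraint on $s'_{a-1}$, $s_a$, and $s_{a+1}$.

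Next I would write the entries of $s_a$ as $y_{k,\ell}$ (with $y_{1,1}=[0]_{n+1}$) and the entries of $s_{a+1}$ as $z_{k,\ell}$. Inspecting Definition~\ref{semi-perfect} at player $a$, $s'_{a-1}$ enters only through its second column: condition~\ref{semi-perfect-line-2} becomes
\[
y_{k+1,2}+z_{k+1,1}-x^{a-1}_{k,2}-y_{k,1}\in\{[0]_{n+1},[1]_{n+1}\} \quad (1 \le k < n-1),
\]
and condition~\ref{semi-perfect-line-3} becomes
\[
z_{1,2}-x^{a-1}_{n-1,2}-y_{n-1,1}\in\{[0]_{n+1},[1]_{n+1}\}.
\]
Each entry $x^{a-1}_{k,2}$ of the second column of $s'_{a-1}$ appears in exactly one of these constraints with coefficient $-1$, so I can force every bracketed difference to equal $[0]_{n+1}$ by choosing $x^{a-1}_{k,2} := y_{k+1,2}+z_{k+1,1}-y_{k,1}$ for $1 \le k < n-1$ and $x^{a-1}_{n-1,2} := z_{1,2}-y_{n-1,1}$. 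The remaining entries of $s'_{a-1}$ (the entire first column) play no role in the constraints at player $a$ and can simply be set to $[0]_{n+1}$, which in particular fulfills $x^{a-1}_{1,1}=[0]_{n+1}$.

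There is no substantive obstacle here; the argument is structurally symmetric to the right expansion lemma. The only visible difference is that the entries of $s'_{a-1}$ determined by the constraints now fill the whole second column (rows $1$ through $n-1$), whereas in the right expansion the determined entries of $s'_{b+1}$ occupied the first column in rows $2$ through $n-1$ together with the single position $(1,2)$. A direct check against Definition~\ref{semi-perfect} then confirms semi-perfection at player $a$, completing the argument.
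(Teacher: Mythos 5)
Your proposal is correct and follows exactly the route the paper intends: the paper's proof of this lemma is simply ``similar to the proof of Lemma~\ref{semi-perfect right expansion},'' and your argument is precisely that mirroring, carried out explicitly --- modifying only $s_{a-1}$, observing that the conditions at players $a+1,\dots,b-1$ are untouched, and solving the two remaining constraints of Definition~\ref{semi-perfect} at player $a$ for the second column of $s'_{a-1}$. Your observation about which entries are determined (the full second column here, versus the first column plus position $(1,2)$ in the right expansion) matches the paper's explicit matrix in the right-expansion proof, and you share with the paper the same harmless implicit assumption that the fixed strategy $s_a$ (respectively $s_b$) has $[0]_{n+1}$ in its top-left entry.
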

\begin{proof}
Similar to the proof of Lemma~\ref{semi-perfect right expansion}.
\end{proof}

\begin{lemma}[infinite expansion]\label{semi-perfect expansion}
For each $a,b\in \mathbb Z$ such that $a<b$, if strategy profile $s=\langle s_i\rangle_{i\in\mathbb Z}$ is semi-perfect for each for player $i$ such that $a<i<b$, then there is a strategy profile $s'=\langle s'_i\rangle_{i\in\mathbb Z}$ such that
\begin{enumerate}
\item $s'_i=s_i$ for each $i$ such that $a\le i\le b$,
\item $s'$ is semi-perfect for each player $i\in\mathbb Z$.
\end{enumerate}
\end{lemma}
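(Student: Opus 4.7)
The plan is to iterate Lemmas~\ref{semi-perfect right expansion} and~\ref{semi-perfect left expansion} so as to push the semi-perfect region outward to infinity in both directions. The key observation that keeps the argument clean is that, as is evident from the proofs of those two lemmas, each application of right (respectively, left) expansion modifies only the single strategy at the new right (respectively, left) endpoint, leaving every previously placed strategy untouched.

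First I would build a rightward infinite expansion. Define profiles $s^{(0)}, s^{(1)}, s^{(2)}, \dots$ by setting $s^{(0)} = s$ and, for $k \ge 1$, letting $s^{(k)}$ be the profile obtained by applying Lemma~\ref{semi-perfect right expansion} to $s^{(k-1)}$ with $b+k-1$ in place of $b$. A straightforward induction on $k$ shows that $s^{(k)}$ is semi-perfect at every $i$ with $a<i<b+k$ and that $s^{(k)}_i = s_i$ for every $a \le i \le b$. Moreover, for each fixed index $j > b$ the value $s^{(k)}_j$ is fixed at step $k = j-b$ and remains unchanged thereafter, so the pointwise limit $\bar s$ defined by $\bar s_j = s^{(\max(0,\,j-b))}_j$ for $j>b$ and $\bar s_j = s_j$ for $j \le b$ is a well-defined strategy profile. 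Because the semi-perfection condition at a player $i$ refers only to $\bar s_{i-1}, \bar s_i, \bar s_{i+1}$, and any such triple has stabilized by stage $k = i+1-b$, the profile $\bar s$ is semi-perfect at every player $i$ with $a < i$.

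An entirely symmetric iteration of Lemma~\ref{semi-perfect left expansion}, this time applied to $\bar s$ with $a-1, a-2, \dots$ playing the role of $a$, yields a profile $s'$ that still agrees with $s$ on $[a,b]$ and is semi-perfect at every $i \in \mathbb Z$. The left and right one-sided extensions do not interfere, because each modifies only indices strictly beyond the already-extended range on its own side, and the semi-perfection condition established on the right half is preserved by the left extension for the same locality reason as above. I expect no real obstacle here: the only subtlety is verifying that semi-perfection at each finite index survives the passage to the pointwise limit, and this follows immediately from the fact that Definition~\ref{semi-perfect} constrains only triples of consecutive players.
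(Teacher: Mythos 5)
Your proof is correct and follows exactly the route the paper takes: its own proof of this lemma is the one-line remark that it ``follows from Lemma~\ref{semi-perfect right expansion} and Lemma~\ref{semi-perfect left expansion},'' i.e.\ precisely the iteration you carry out. Your elaboration of the pointwise limit --- noting that each expansion step alters only the one new endpoint strategy and that semi-perfection is a condition on consecutive triples, so every relevant triple stabilizes --- supplies the details the paper leaves implicit.
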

\begin{proof}
Follows from Lemma~\ref{semi-perfect right expansion} and Lemma~\ref{semi-perfect left expansion}.
\end{proof}

\subsubsection{Properties of Nash equilibria of game $G_n$}

\begin{lemma}\label{gn ne}
For any $n\ge 3$, strategy profile $e$ is a Nash equilibrium of the game $G_n$ if and only if profile $e$ is semi-perfect at each player $i\in \mathbb Z$.
\end{lemma}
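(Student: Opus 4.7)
The plan is to exploit two simple observations: first, the pay-off function $u$ in Definition~\ref{gn} takes only the values $0$ and $1$; second, the three conditions characterising $u(x,y,z)=1$ in Definition~\ref{gn} are literally the same three conditions characterising semi-perfectness at player $i$ in Definition~\ref{semi-perfect}, once one substitutes $x=e_{i-1}$, $y=e_i$, $z=e_{i+1}$. So the lemma is really a best-response argument: in equilibrium, each player already attains the maximum possible pay-off $1$, which is exactly the statement of semi-perfectness at that player.

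The $(\Leftarrow)$ direction is a one-line verification. If $e$ is semi-perfect at every player $i$, then the three bullets of Definition~\ref{semi-perfect} imply the three bullets of Definition~\ref{gn}, and hence $u(e_{i-1},e_i,e_{i+1})=1$ for every $i\in\mathbb Z$. Since $u\le 1$ always, the inequality $u(e_{i-1},s,e_{i+1})\le 1 = u(e_{i-1},e_i,e_{i+1})$ holds for every $s\in S^n$ and every $i$, so $e$ is a Nash equilibrium.

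For the $(\Rightarrow)$ direction, fix a player $i$ and write $x=e_{i-1}$, $z=e_{i+1}$. The key claim is that for \emph{any} such neighbour strategies there exists a strategy $y\in S^n$ with $u(x,y,z)=1$. I would construct such a $y$ directly: set $y_{1,1}=[0]_{n+1}$ (forced by condition~1); pick $y_{1,2}$ and $y_{k,1}$ for $k=2,\dots,n-2$ to be arbitrary (say $[0]_{n+1}$); for $k=1,\dots,n-2$ set $y_{k+1,2}=x_{k,2}+y_{k,1}-z_{k+1,1}$ so that condition~2 holds with difference $[0]_{n+1}$; and finally set $y_{n-1,1}=z_{1,2}-x_{n-1,2}$ so that condition~3 holds with difference $[0]_{n+1}$. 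One just needs to notice that these prescriptions do not collide: condition~2 at $k=n-2$ fixes $y_{n-1,2}$ rather than $y_{n-1,1}$, so the choice of $y_{n-1,1}$ used for condition~3 is independent, and every entry of $y$ is determined (or left free) by exactly one rule.

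Once the best-response claim is established, the rest is immediate. The Nash condition at player $i$ forces $u(e_{i-1},e_i,e_{i+1})\ge u(e_{i-1},y,e_{i+1})=1$, hence $u(e_{i-1},e_i,e_{i+1})=1$. Unpacking this equality via Definition~\ref{gn} yields precisely conditions 1--3 of Definition~\ref{semi-perfect} at $i$, so $e$ is semi-perfect at $i$. Since $i\in\mathbb Z$ was arbitrary, $e$ is semi-perfect at every player. The only mildly delicate step is the consistency check in the best-response construction; everything else is bookkeeping between the two definitions.
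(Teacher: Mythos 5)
Your proof is correct and is essentially the argument the paper intends: its own ``proof'' is just a pointer to Definitions~\ref{gn} and~\ref{semi-perfect}, relying on exactly the observation you make that the three pay-off conditions coincide with the three semi-perfectness conditions under $x=e_{i-1}$, $y=e_i$, $z=e_{i+1}$. You additionally supply the one genuinely non-obvious step that the paper leaves implicit, namely the explicit best-response construction showing every player can always attain the maximal pay-off $1$, and your consistency check (condition~2 determines the second column, condition~3 the entry $y_{n-1,1}$, with no collision) is sound.
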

\begin{proof}
See Definition~\ref{gn} and Definition~\ref{semi-perfect}.
\end{proof}

\begin{lemma}\label{diagonal lemma}
For any $a\in \mathbb Z$ and any $n\ge 3$, if 
$$\scaleleftright[2ex]{\langle}{\left(
\begin{array}{ll}
x^i_{1,1} & x^i_{1,2}  \\ \relax
x^i_{2,1} & x^i_{2,2}  \\ \relax
\vdots & \vdots \\ \relax
x^i_{n-1,1} & x^i_{n-1,2} 
\end{array}
\right)}{\rangle}_{i\in \mathbb Z}$$
is a Nash equilibrium of the game $G_n$ and $x^a_{1,2}=[0]_{n+1}$, then
$$
x^{a+k}_{k+1,2}+x^{a+k+1}_{k+1,1}\in\{[0],[1],[2],\dots,[k]\},
$$
for each $0\le k\le n-2$.
\end{lemma}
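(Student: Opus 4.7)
The plan is to reduce to semi-perfectness via Lemma~\ref{gn ne} and then run a short induction on $k$. Set $T_k := x^{a+k}_{k+1,2}+x^{a+k+1}_{k+1,1}$, so the claim becomes: $T_k\in\{[0],[1],\dots,[k]\}$ in $\mathbb Z_{n+1}$ for every $0\le k\le n-2$.

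For the base case $k=0$, the assumption gives $x^a_{1,2}=[0]_{n+1}$, and clause~\ref{semi-perfect-line-1} of Definition~\ref{semi-perfect} applied at player $a+1$ gives $x^{a+1}_{1,1}=[0]_{n+1}$. Hence $T_0=[0]_{n+1}$, as required.

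For the inductive step, assume $T_{k-1}\in\{[0],\dots,[k-1]\}$ for some $1\le k\le n-2$. Apply clause~\ref{semi-perfect-line-2} of Definition~\ref{semi-perfect} at player $i=a+k$, taking the inner index in that clause to equal $k$ (this is allowed since $1\le k<n-1$). This yields
\[
x^{a+k}_{k+1,2}+x^{a+k+1}_{k+1,1}-x^{a+k-1}_{k,2}-x^{a+k}_{k,1}\in\{[0]_{n+1},[1]_{n+1}\},
\]
which is exactly $T_k-T_{k-1}\in\{[0],[1]\}$. Thus $T_k$ equals either $T_{k-1}$ or $T_{k-1}+[1]$. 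If the canonical representative of $T_{k-1}$ lies in $\{0,1,\dots,k-1\}$, then the canonical representative of $T_k$ lies in $\{0,1,\dots,k\}$; since $k\le n-2<n+1$, these $k+1$ integers represent distinct classes of $\mathbb Z_{n+1}$, so no wrap-around occurs and $T_k\in\{[0],\dots,[k]\}$.

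The statement is really just a telescoping observation: each step of the ``diagonal'' increases $T_k$ by at most $[1]$ in $\mathbb Z_{n+1}$, so starting from $T_0=[0]$ forces $T_k$ to stay inside the initial segment $\{[0],\dots,[k]\}$. There is no genuine obstacle; the only things to get right are the index alignment between the statement's $k$ and the inner index in clause~\ref{semi-perfect-line-2} of Definition~\ref{semi-perfect} (both happen to coincide at $k$ when one fixes $i=a+k$), and the remark that the bound $k\le n-2$ prevents modular collapse in $\mathbb Z_{n+1}$.
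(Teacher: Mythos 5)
Your proof is correct and follows essentially the same route as the paper's: reduce to semi-perfectness via Lemma~\ref{gn ne} and induct on $k$, using clause~\ref{semi-perfect-line-1} of Definition~\ref{semi-perfect} for the base case and clause~\ref{semi-perfect-line-2} for the step. Your write-up is in fact slightly cleaner on the indices (the paper's displayed step contains a subscript typo, $k+3$ where $k+2$ is meant), and your wrap-around remark, while harmless, is not needed since the claim concerns membership in a set of congruence classes.
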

\begin{proof}
Induction on $k$. If $k=0$, then $x^{a+k}_{k+1,2}=x^a_{1,2}=[0]$ due to the assumption of the lemma. At the same time, $x^{a+k+1}_{k+1,1}=x^{a+1}_{1,1}=[0]$ by Lemma~\ref{gn ne} and item \ref{semi-perfect-line-1} of Definition~\ref{semi-perfect}. Thus,
$$x^{a+k}_{k+1,2}+x^{a+k+1}_{k+1,1}=[0]+[0]=[0]\in \{[0]\}.$$

For the induction step, assume that
$$
x^{a+k}_{k+1,2}+x^{a+k+1}_{k+1,1}\in\{[0],[1],[2],\dots,[k]\}.
$$
By Lemma~\ref{gn ne} and item \ref{semi-perfect-line-2} of Definition~\ref{semi-perfect}, there is $\varepsilon\in\{[0]_{n+1},[1]_{n+1}\}$ such that
$$
x^{a+k+1}_{k+3,2}+x^{a+k+2}_{k+2,1} = x^{a+k}_{k+1,2}+x^{a+k+1}_{k+1,1} + \varepsilon.
$$
Therefore,
$$
x^{a+k+1}_{k+3,2}+x^{a+k+2}_{k+2,1}\in \{[0],[1],[2],\dots,[k],[k+1]\}.
$$
\end{proof}

\begin{lemma}\label{post diagonal lemma}
For any $a\in \mathbb Z$ and any $n\ge 3$, if 
$$\scaleleftright[2ex]{\langle}{\left(
\begin{array}{ll}
x^i_{1,1} & x^i_{1,2}  \\ \relax
x^i_{2,1} & x^i_{2,2}  \\ \relax
\vdots & \vdots \\ \relax
x^i_{n-1,1} & x^i_{n-1,2} 
\end{array}
\right)}{\rangle}_{i\in \mathbb Z}$$
is a Nash equilibrium of the game $G_n$ and $x^a_{1,2}=[0]_{n+1}$, then
$$
x^{a+n}_{1,2}\in\{[0],[1],[2],\dots,[n-1]\}.
$$
\end{lemma}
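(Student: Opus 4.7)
The plan is to combine Lemma~\ref{diagonal lemma} at its extreme index $k=n-2$ with item~\ref{semi-perfect-line-3} of Definition~\ref{semi-perfect} applied at player $a+n-1$. The former controls the sum $x^{a+n-2}_{n-1,2}+x^{a+n-1}_{n-1,1}$, while the latter is precisely the clause of semi-perfectness that links this sum to $x^{a+n}_{1,2}$.

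First, I would instantiate Lemma~\ref{diagonal lemma} with $k=n-2$, which is allowed since $0\le n-2\le n-2$. This yields
$$x^{a+n-2}_{n-1,2}+x^{a+n-1}_{n-1,1}\in\{[0],[1],\dots,[n-2]\}.$$
Next, since the given strategy profile is a Nash equilibrium, Lemma~\ref{gn ne} tells us it is semi-perfect at every player, in particular at player $a+n-1$. Writing out item~\ref{semi-perfect-line-3} of Definition~\ref{semi-perfect} with $i=a+n-1$ gives
$$x^{a+n}_{1,2}-x^{a+n-2}_{n-1,2}-x^{a+n-1}_{n-1,1}\in\{[0]_{n+1},[1]_{n+1}\}.$$

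Adding these two facts, $x^{a+n}_{1,2}$ is the sum of an element of $\{[0],\dots,[n-2]\}$ and an element of $\{[0],[1]\}$, hence lies in $\{[0],[1],\dots,[n-1]\}$, as required. There is no real obstacle here beyond aligning the indices correctly: the only subtle point is to notice that item~\ref{semi-perfect-line-3} of Definition~\ref{semi-perfect} is exactly the ``wrap-around'' semi-perfectness condition needed to pass from the diagonal information at column $n-1$ (supplied by the previous lemma) back to column $1$ of the next matrix, which is the quantity $x^{a+n}_{1,2}$ that the statement refers to.
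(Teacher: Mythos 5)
Your proposal is correct and follows exactly the paper's own argument: apply Lemma~\ref{diagonal lemma} at $k=n-2$ to bound $x^{a+n-2}_{n-1,2}+x^{a+n-1}_{n-1,1}$ in $\{[0],\dots,[n-2]\}$, then invoke Lemma~\ref{gn ne} together with item~\ref{semi-perfect-line-3} of Definition~\ref{semi-perfect} at $i=a+n-1$ to add an $\varepsilon\in\{[0],[1]\}$ and land in $\{[0],\dots,[n-1]\}$. No gaps; this is the same proof.
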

\begin{proof}
By Lemma~\ref{diagonal lemma},
$$
x^{a+n-2}_{n-1,2}+x^{a+n-1}_{n-1,1}\in\{[0],[1],[2],\dots,[n-2]\}.
$$
By Lemma~\ref{gn ne} and item \ref{semi-perfect-line-3} of Definition~\ref{semi-perfect}, for $i=a+n-1$, there is $\varepsilon\in\{[0]_{n+1},[1]_{n+1}\}$ such that
$$x^{a+n}_{1,2}= x^{a+n-2}_{n-1,2}+x^{a+n-1}_{n-1,1}+\varepsilon.$$  
Therefore,
$x^{a+n}_{1,2}\in \{[0],[1],[2],\dots,[n-2],[n-1]\}$.
\end{proof}

\begin{definition}\label{fn}
For any $n\ge 3$, let $f^n=\langle f^n_i\rangle_{i\in\mathbb Z}$ be the strategy profile of the game $G_n$ such that
$$
f^n_i=
\left(
\begin{array}{ll}
[0]_{n+1} & [0]_{n+1} \\ \relax
[0]_{n+1} & [0]_{n+1} \\ \relax
\vdots & \vdots \\ \relax
[0]_{n+1} & [0]_{n+1}
\end{array}
\right),
$$
for each $i\in \mathbb Z$.
\end{definition}
\begin{lemma}\label{fn in ne}
$f^n\in NE(G_n)$ for each $n\ge 3$.
\end{lemma}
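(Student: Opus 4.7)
The plan is to apply Lemma~\ref{gn ne}, which reduces the claim that $f^n \in NE(G_n)$ to showing that $f^n$ is semi-perfect at every player $i \in \mathbb{Z}$. So I would verify the three conditions of Definition~\ref{semi-perfect} directly against $f^n$.

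Since every entry of the matrix $f^n_i$ is $[0]_{n+1}$ for every $i$, each of the three semi-perfection conditions becomes a trivial check. For condition~\ref{semi-perfect-line-1}, we have $x^i_{1,1} = [0]_{n+1}$ by definition of $f^n$. For condition~\ref{semi-perfect-line-2}, the expression $x^i_{k+1,2} + x^{i+1}_{k+1,1} - x^{i-1}_{k,2} - x^i_{k,1}$ is a sum and difference of four zeros, hence equal to $[0]_{n+1}$, which lies in $\{[0]_{n+1},[1]_{n+1}\}$. Condition~\ref{semi-perfect-line-3} is analogous: $x^{i+1}_{1,2} - x^{i-1}_{n-1,2} - x^i_{n-1,1} = [0]_{n+1} \in \{[0]_{n+1},[1]_{n+1}\}$.

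There is no real obstacle here; the lemma is essentially bookkeeping meant to supply a concrete baseline Nash equilibrium of $G_n$ that will be used in subsequent arguments (for instance, in combination with Lemma~\ref{perfect sum} and Lemma~\ref{perfect exists}). The only care needed is to cite Lemma~\ref{gn ne} and Definition~\ref{semi-perfect} explicitly, so that the reader sees that all three numbered conditions of semi-perfection collapse to the trivial statement $[0]_{n+1} \in \{[0]_{n+1},[1]_{n+1}\}$ at every player~$i$.
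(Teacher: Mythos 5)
Your proposal is correct and matches the paper's argument exactly: the paper's proof is the one-line citation ``By Lemma~\ref{gn ne} and Definition~\ref{semi-perfect},'' and you have simply spelled out the trivial verification that all three semi-perfection conditions reduce to $[0]_{n+1}\in\{[0]_{n+1},[1]_{n+1}\}$ when every matrix entry is zero.
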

\begin{proof}
By Lemma~\ref{gn ne} and Definition~\ref{semi-perfect}.
\end{proof}

\begin{definition}\label{}
For any $n\ge 3$, let $g^n=\langle g^n_i\rangle_{i\in\mathbb Z}$ be the strategy profile of the game $G_n$ such that
$$
g^n_i=
\left(
\begin{array}{ll}
[0]_{n+1} & [n]_{n+1} \\ \relax
[0]_{n+1} & [n]_{n+1} \\ \relax
\vdots & \vdots \\ \relax
[0]_{n+1} & [n]_{n+1}
\end{array}
\right),
$$
for each $i\in\mathbb Z$.
\end{definition}
\begin{lemma}
$g^n\in NE(G_n)$ for each $n\ge 3$.
\end{lemma}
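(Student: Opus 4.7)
The plan is to apply Lemma~\ref{gn ne}, which reduces the task of showing $g^n\in NE(G_n)$ to verifying that the profile $g^n$ is semi-perfect at every player $i\in\mathbb Z$. Since $g^n_i$ is the same matrix for every $i$ (all first-column entries are $[0]_{n+1}$ and all second-column entries are $[n]_{n+1}$), the three conditions of Definition~\ref{semi-perfect} need only be checked once, by direct substitution, and they will automatically hold at every player.

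First I would verify condition (\ref{semi-perfect-line-1}): $x^i_{1,1}=[0]_{n+1}$, which is immediate from the definition of $g^n_i$. Next, for condition (\ref{semi-perfect-line-2}) and each $1\le k<n-1$, I would substitute $x^i_{k+1,2}=[n]_{n+1}$, $x^{i+1}_{k+1,1}=[0]_{n+1}$, $x^{i-1}_{k,2}=[n]_{n+1}$, and $x^i_{k,1}=[0]_{n+1}$, obtaining
$$
x^i_{k+1,2}+x^{i+1}_{k+1,1}-x^{i-1}_{k,2}-x^i_{k,1}=[n]+[0]-[n]-[0]=[0]_{n+1}\in\{[0]_{n+1},[1]_{n+1}\}.
$$
Finally, for condition (\ref{semi-perfect-line-3}), I would substitute $x^{i+1}_{1,2}=[n]_{n+1}$, $x^{i-1}_{n-1,2}=[n]_{n+1}$, and $x^i_{n-1,1}=[0]_{n+1}$, obtaining
$$
x^{i+1}_{1,2}-x^{i-1}_{n-1,2}-x^i_{n-1,1}=[n]-[n]-[0]=[0]_{n+1}\in\{[0]_{n+1},[1]_{n+1}\}.
$$

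Since all three semi-perfectness conditions hold at every $i\in\mathbb Z$, Lemma~\ref{gn ne} gives $g^n\in NE(G_n)$. There is no real obstacle here; the lemma is a routine arithmetic verification, analogous to Lemma~\ref{fn in ne} but with the non-trivial second column $[n]_{n+1}$. The only point worth noting is that the tricky third condition still collapses to $[0]_{n+1}$ because the $[n]_{n+1}$ contributions from $x^{i+1}_{1,2}$ and $x^{i-1}_{n-1,2}$ cancel, which is the whole reason this constant-value matrix is a valid equilibrium.
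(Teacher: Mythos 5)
Your verification is correct and follows exactly the route the paper takes: the paper's proof is simply ``By Lemma~\ref{gn ne} and Definition~\ref{semi-perfect},'' and your explicit substitutions ($[n]+[0]-[n]-[0]=[0]_{n+1}$ for both the second and third conditions) are precisely the omitted routine check. Nothing is missing.
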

\begin{proof}
By Lemma~\ref{gn ne} and Definition~\ref{semi-perfect}.
\end{proof}

\subsubsection{Main property of game $G_n$}

\begin{theorem}\label{gn theorem}
$G_n\vDash a\parallel b$ if and only if $|a-b|\neq n$ and $a\neq b$, where $n\ge 3$.
\end{theorem}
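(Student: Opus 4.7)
The proof will mirror the structure of Theorems~\ref{g0 theorem}, \ref{g1 theorem}, and \ref{g2 theorem}, splitting into necessity and sufficiency; without loss of generality I assume $a \leq b$. For necessity I use the two trivial Nash equilibria $f^n$ and $g^n$ defined in Section~\ref{gn}. If $a = b$, then any equilibrium $e$ with $e_a = f^n_a$ and $e_b = g^n_b$ would force the all-zero matrix $f^n_a$ to equal $g^n_a$, which fails at the $(1,2)$ entry. If $|a - b| = n$, say $b = a + n$, then $(e_a)_{1,2} = [0]_{n+1}$, so Lemma~\ref{post diagonal lemma} yields $(e_{a+n})_{1,2} \in \{[0], \ldots, [n-1]\}$; but $(g^n_b)_{1,2} = [n]_{n+1}$, a contradiction.

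For sufficiency, fix arbitrary Nash equilibria $f, g$. By Lemma~\ref{gn ne} both are semi-perfect everywhere, so the $(1,1)$ entries of $f_a$ and $g_b$ equal $[0]_{n+1}$, meeting the hypotheses of Lemmas~\ref{perfect exists} and~\ref{semi-perfect exists}. I split into two cases. If $b - a > n$, I apply Lemma~\ref{semi-perfect exists} with $A = f_a$ and $B = g_b$ to obtain a profile $s$ with $s_a = f_a$, $s_b = g_b$, and $s$ semi-perfect at every $i$ with $a < i < b$; Lemma~\ref{semi-perfect expansion} then extends $s$ to a profile semi-perfect on all of $\mathbb{Z}$, which by Lemma~\ref{gn ne} is the desired Nash equilibrium. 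If $0 < b - a < n$, I invoke Lemma~\ref{perfect exists} twice: once with $M = f_a$ at position $a$ and zero at position $b$, producing $s_1$; and once with the roles of $a$ and $b$ exchanged and $M = g_b$, relying on the proof's remark that the symmetric case is handled analogously, to produce $s_2$ with $s_{2,a} = 0$ and $s_{2,b} = g_b$. Each profile is perfect at every player strictly between $a$ and $b$, so by Lemma~\ref{perfect sum} the element-wise sum $s := s_1 + s_2$ is perfect (hence semi-perfect) at those players, with $s_a = f_a$ and $s_b = g_b$. Lemma~\ref{semi-perfect expansion} again extends $s$ to a profile semi-perfect on all of $\mathbb{Z}$, producing a Nash equilibrium by Lemma~\ref{gn ne}.

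The mildly subtle point is the second, "reversed" application of Lemma~\ref{perfect exists}, which depends on its proof handling both orderings of $a$ and $b$ uniformly. Otherwise the argument is a straightforward assembly of the machinery developed earlier in Section~\ref{gn}: building matched endpoints out of perfect profiles, combining them via Lemma~\ref{perfect sum}, and closing with the infinite semi-perfect extension.
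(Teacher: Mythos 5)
Your proposal is correct and follows essentially the same route as the paper: necessity via $f^n$, $g^n$ and Lemma~\ref{post diagonal lemma}, and sufficiency by splitting into $b-a>n$ (Lemma~\ref{semi-perfect exists} plus Lemma~\ref{semi-perfect expansion}) and $0<b-a<n$ (two applications of Lemma~\ref{perfect exists} combined by Lemma~\ref{perfect sum}). The ``reversed'' second application of Lemma~\ref{perfect exists} that you flag is exactly how the paper obtains its profile $t$ with $t_a$ the zero matrix and $t_b=w_b$, so there is no gap there.
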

\begin{proof}
\noindent$(\Rightarrow):$  Let $G_n\vDash a\parallel b$. First, suppose that $|a-b|=n$. Without loss of generality, assume that $b=a+n$. Due to assumption $G_n\vDash a\parallel b$, there must exist an equilibrium 
$$\langle e_i\rangle_{i\in \mathbb Z}
=\scaleleftright[2ex]{\langle}{\left(
\begin{array}{ll}
x^i_{1,1} & x^i_{1,2}  \\ \relax
x^i_{2,1} & x^i_{2,2}  \\ \relax
\vdots & \vdots \\ \relax
x^i_{n-1,1} & x^i_{n-1,2} 
\end{array}
\right)}{\rangle}_{i\in \mathbb Z}$$
of the game $G_n$ such that $e_a=f^n_a$ and $e_b=g^n_b$, where $f^n=\langle f^n_i\rangle_{i\in\mathbb Z}$  and $g^n=\langle g^n_i\rangle_{i\in\mathbb Z}$ are Nash equilibria of the game $G_n$ defined in the previous section. Thus, $x^a_{1,2}=[0]$ and $x^b_{1,2}=[n]$, which is a contradiction to Lemma~\ref{post diagonal lemma}.

Assume now that $a=b$. Due to assumption $G_n\vDash a\parallel b$, there must exist an equilibrium $e=\langle e_i\rangle_{i\in \mathbb Z}$ of the game $G_n$ such that $f^n_a=e_a=e_b=g^n_b$. Thus, $[0]_{n+1}=[n]_{n+1}$, which is a contradiction.

\noindent$(\Leftarrow):$ Without loss of generality, assume that $b>a$. 
To prove that $G_n\vDash a\parallel b$, consider any two Nash equilibria 
$$v=\langle v_i\rangle_{i\in \mathbb Z}=\scaleleftright[2ex]{\langle}{\left(
\begin{array}{ll}
[0] & x^i_{1,2}  \\ \relax
x^i_{2,1} & x^i_{2,2}  \\ \relax
\vdots & \vdots \\ \relax
x^i_{n-1,1} & x^i_{n-1,2} 
\end{array}
\right)}{\rangle}_{i\in \mathbb Z}$$
and
$$w=\langle w_i\rangle_{i\in \mathbb Z}=\scaleleftright[2ex]{\langle}{\left(
\begin{array}{ll}
[0] & y^i_{1,2}\\ \relax
y^i_{2,1} & y^i_{2,2} \\ \relax
\vdots & \vdots \\ \relax
y^i_{n-1,1} & y^i_{n-1,2}
\end{array}
\right)}{\rangle}_{i\in \mathbb Z}$$
of the game $G_n$. Note that the upper left element in each of the above matrices is $[0]$ due to Lemma~\ref{gn ne}. We need to show that there is an equilibrium $e\in NE(G_n)$ such that $e$ agrees with equilibrium $v$ on the strategy of player $a$ and with equilibrium $w$ on the strategy of player $b$. 

\noindent{\em Case I:} $b-a<n$. By Lemma~\ref{perfect exists}, there are strategy profiles $\langle s_i\rangle_{i\in \mathbb Z}$ and $\langle t_i\rangle_{i\in \mathbb Z}$ both perfect at each player $i$ such that $a<i<b$ and
$$s_a=
\left(
\begin{array}{ll}
[0] & x^a_{1,2}  \\ \relax
x^a_{2,1} & x^a_{2,2}  \\ \relax
\vdots & \vdots \\ \relax
x^a_{n-1,1} & x^a_{n-1,2} 
\end{array}
\right), 
\;\;\;\;\;
s_b=
\left(
\begin{array}{ll}
[0]_{n+1} & [0]_{n+1}  \\ \relax
[0]_{n+1} & [0]_{n+1}  \\ \relax
\vdots & \vdots \\ \relax
[0]_{n+1} & [0]_{n+1} 
\end{array}
\right),
$$
$$
t_a=
\left(
\begin{array}{ll}
[0]_{n+1} & [0]_{n+1}  \\ \relax
[0]_{n+1} & [0]_{n+1}  \\ \relax
\vdots & \vdots \\ \relax
[0]_{n+1} & [0]_{n+1} 
\end{array}
\right),\;\;\;\;\;
t_b=
\left(
\begin{array}{ll}
[0] & y^b_{1,2}  \\ \relax
y^b_{2,1} & y^b_{2,2}  \\ \relax
\vdots & \vdots \\ \relax
y^b_{n-1,1} & y^b_{n-1,2} 
\end{array}
\right).
$$
By Lemma~\ref{perfect sum}, strategy profile $\langle s_i+t_i\rangle_{i\in\mathbb Z}$ is perfect at each player $i$ such that $a<i<b$. Thus, by Lemma~\ref{semi-perfect expansion}, there is a strategy profile $\langle e_i\rangle_{i\in\mathbb Z}$ of the game $G_n$ where $e_i=s_i+t_i$ for each $i$ such that $a\le i\le b$ and $e_i$ is semi-perfect for each player $i\in\mathbb Z$. Hence, by Lemma~\ref{gn ne},  strategy profile $\langle e_i\rangle_{i\in\mathbb Z}$ is a Nash equilibrium of the game $G_n$. We are only left to notice that
\begin{align*}
&e_a=s_a+t_a=
\left(
\begin{array}{ll}
[0] & x^a_{1,2}  \\ \relax
x^a_{2,1} & x^a_{2,2}  \\ \relax
\vdots & \vdots \\ \relax
x^a_{n-1,1} & x^a_{n-1,2} 
\end{array}
\right)
+
\left(
\begin{array}{ll}
[0]_{n+1} & [0]_{n+1}  \\ \relax
[0]_{n+1} & [0]_{n+1}  \\ \relax
\vdots & \vdots \\ \relax
[0]_{n+1} & [0]_{n+1} 
\end{array}
\right)
\\
&=
\left(
\begin{array}{ll}
[0] & x^a_{1,2}  \\ \relax
x^a_{2,1} & x^a_{2,2}  \\ \relax
\vdots & \vdots \\ \relax
x^a_{n-1,1} & x^a_{n-1,2} 
\end{array}
\right)
=v_a
\end{align*}
and
\begin{align*}
&e_b=s_b+t_b=
\left(
\begin{array}{ll}
[0]_{n+1} & [0]_{n+1}  \\ \relax
[0]_{n+1} & [0]_{n+1}  \\ \relax
\vdots & \vdots \\ \relax
[0]_{n+1} & [0]_{n+1} 
\end{array}
\right)
+
\left(
\begin{array}{ll}
[0] & y^b_{1,2}  \\ \relax
y^b_{2,1} & y^b_{2,2}  \\ \relax
\vdots & \vdots \\ \relax
y^b_{n-1,1} & y^b_{n-1,2} 
\end{array}
\right)
\\
&=
\left(
\begin{array}{ll}
[0] & y^b_{1,2}  \\ \relax
y^b_{2,1} & y^b_{2,2}  \\ \relax
\vdots & \vdots \\ \relax
y^b_{n-1,1} & y^b_{n-1,2} 
\end{array}
\right)
=w_b.
\end{align*}

\noindent{\em Case II:} $b-a>n$. By Lemma~\ref{semi-perfect exists}, there exists a strategy profile $\langle s_i\rangle_{i\in \mathbb Z}$ such that $s_a=v_a$, $s_b=w_b$, and strategy profile $\langle s_i\rangle_{i\in \mathbb Z}$ is semi-perfect at each player $i$ such that $a<i<b$. By Lemma~\ref{semi-perfect expansion}, there is a strategy profile $\langle e_i\rangle_{i\in \mathbb Z}$ semi-perfect at each player $i\in \mathbb Z$ such that $e_i=s_i$ for each $a\le i\le b$. By Lemma~\ref{gn ne}, strategy profile $\langle e_i\rangle_{i\in \mathbb Z}$ is a Nash equilibrium of the game $G_n$. We are only left to notice that
$e_a=s_a=v_a$ and $e_b=s_b=w_b$.
\end{proof}

\subsection{Combining games together}\label{combine}

\begin{theorem}\label{Gall}
For any $n\ge 0$, and any $a,b\in \mathbb Z$, 
$$
G_n \Vdash a\parallel b \mbox {\hspace{1cm} if and only if \hspace{1cm}} |a-b| \neq n \mbox{ and } a\neq b.
$$
\end{theorem}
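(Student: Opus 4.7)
The plan is to observe that Theorem~\ref{Gall} is simply a uniform restatement of the four previously established characterizations, and the proof amounts to a routine case split on the value of $n$, verifying in each case that the condition ``$|a-b|\neq n$ and $a\neq b$'' coincides with the characterization established earlier.

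First I would split into the four cases $n=0$, $n=1$, $n=2$, and $n\ge 3$. For $n=0$, the condition ``$|a-b|\neq 0$ and $a\neq b$'' is equivalent to $a\neq b$ (since $|a-b|\neq 0$ already means $a\neq b$), so the claim follows immediately from Theorem~\ref{g0 theorem}. For $n=1$, the condition ``$|a-b|\neq 1$ and $a\neq b$'' is equivalent to $|a-b|\ge 2$, i.e., $|a-b|>1$, matching Theorem~\ref{g1 theorem}. For $n=2$, the condition ``$|a-b|\neq 2$ and $a\neq b$'' is equivalent to $|a-b|\in\{1\}\cup\{k\in\mathbb{Z}:k>2\}$, that is, $|a-b|=1$ or $|a-b|>2$, matching Theorem~\ref{g2 theorem}. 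For $n\ge 3$, the statement is literally Theorem~\ref{gn theorem}.

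There is essentially no obstacle here; the only thing to be careful about is the bookkeeping for small $n$, since Theorems~\ref{g0 theorem}, \ref{g1 theorem}, and \ref{g2 theorem} each state the equivalence in a slightly different form, and one must check that each form is logically identical to the unified condition. The proof will therefore be a short paragraph of case analysis rather than any new construction, with the real mathematical content having already been developed in the preceding subsections on games $G_0$, $G_1$, $G_2$, and $G_n$ for $n\ge 3$.
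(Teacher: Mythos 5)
Your proposal is correct and takes exactly the same route as the paper, which proves Theorem~\ref{Gall} simply by citing Theorems~\ref{g0 theorem}, \ref{g1 theorem}, \ref{g2 theorem}, and \ref{gn theorem}; your explicit verification that the unified condition ``$|a-b|\neq n$ and $a\neq b$'' matches each of the three small-$n$ formulations is the right (and only) bookkeeping needed.
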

\begin{proof}
See Theorem~\ref{g0 theorem}, Theorem~\ref{g1 theorem}, Theorem~\ref{g2 theorem}, and Theorem~\ref{gn theorem}.
\end{proof}

\begin{theorem}\label{ne exists}
For any $n\ge 0$, game $G_n$ has at least one Nash equilibrium.
\end{theorem}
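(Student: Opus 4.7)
The plan is to case-split on the value of $n$ and in each case exhibit an explicit Nash equilibrium, invoking the Nash-equilibrium characterization lemma for each game. In every case the witness will be the constant ``all zeros'' strategy profile, so the proof reduces to checking that this profile satisfies the relevant conditions, which have already been distilled into named lemmas.

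For $n=0$, take any constant profile, say $e_k=[0]_2$ for all $k\in\mathbb{Z}$; by Lemma~\ref{g0ne} every strategy profile of $G_0$ is a Nash equilibrium, so there is nothing further to check. For $n=1$, take $e_k=[0]_3$ for all $k\in\mathbb{Z}$; then $e_k-e_{k-1}=[0]_3\in\{[0]_3,[1]_3\}$, so $e\in NE(G_1)$ by Lemma~\ref{g1ne}. For $n=2$, take $e_k=[0]_3$ for all $k\in\mathbb{Z}$; each $e_k$ lies in $\mathbb{Z}_3$ and $e_{k+2}-e_k=[0]_3\in\{[0]_3,[1]_3\}$, so $e\in NE(G_2)$ by Lemma~\ref{g2ne}.

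For $n\ge 3$ the work has essentially been done already: the profile $f^n$ of Definition~\ref{fn} is the all-zeros profile, and Lemma~\ref{fn in ne} states exactly that $f^n\in NE(G_n)$. So this case is immediate.

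No step is a real obstacle; the whole proof is a case-by-case appeal to previously established equilibrium-characterization lemmas, with the all-zeros profile serving as a uniform witness.
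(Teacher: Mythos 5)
Your proof is correct and follows essentially the same route as the paper's: a case split on $n$ with the constant all-zeros profile as witness, justified by Lemma~\ref{g0ne}, Lemma~\ref{g1ne}, Lemma~\ref{g2ne}, and Lemma~\ref{fn in ne} respectively. The only difference is cosmetic --- you spell out the verification of the characterization conditions for $n=1,2$, which the paper leaves implicit.
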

\begin{proof}
{\em Case I}: $n=0$. Consider strategy profile $e=\langle e_k\rangle_{k\in\mathbb Z}$ such that $e_k=[0]_2$ for each $k\in\mathbb Z$. By Lemma~\ref{g0ne}, strategy profile $e$ is a Nash equilibrium of the games $G_0$.

\noindent {\em Case II}: $n=1,2$. Consider strategy profile $e=\langle e_k\rangle_{k\in\mathbb Z}$ such that $e_k=[0]_3$ for each $k\in\mathbb Z$. By Lemma~\ref{g1ne} and Lemma~\ref{g2ne}, strategy profile $e$ is a Nash equilibrium of the games $G_1$ and $G_2$.

\noindent {\em Case III}: $n>2$. By Lemma~\ref{fn in ne},  strategy profile $f^n$ is a Nash equilibrium of the game $G_n$.
\end{proof}

\subsection{Product of games}\label{product section}

In this section we define product operation on cellular games. Informally, product is a composition of several cellular games played concurrently and independently. The pay-off of a player in the product of the games is the sum of the pay-offs in the individual games. 

\begin{definition}\label{product definition}
Product $\prod_{i=1}^n G^i$ of  any finite family of cellular games $\{G^i\}_{i=1}^n=\{(S^i,u^i)\}_{i=1}^n$ is the cellular game $G=(S,u)$, where
\begin{enumerate}
\item $S$ is Cartesian product $\prod_{i=1}^nS^i$,
\item $u(\langle x^i\rangle_{i\le n},\langle y^i\rangle_{i\le n},\langle z^i\rangle_{i\le n})=\sum_{i=1}^n u^i(x^i,y^i,z^i)$.
\end{enumerate}
\end{definition}

\begin{lemma}\label{product ne lemma}
If $\{G^i\}_{i=1}^n=\{(S^i,u^i)\}_{i=1}^n$ is a finite family of cellular games, then
$\langle\langle e^i_k\rangle_{i\le n}\rangle_{k\in \mathbb Z}\in NE(\prod_{i=1}^n G^i)$ if and only if
$\langle e^i_k\rangle_{k\in\mathbb Z}\in NE(G^i)$ for each $i\le n$.
\end{lemma}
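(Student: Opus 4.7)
The plan is to unfold the definition of a Nash equilibrium for the product game and exploit the fact, visible from Definition~\ref{product definition}, that the pay-off $u$ decomposes as a sum $\sum_{i=1}^n u^i$ in which the $i$-th summand depends only on the $i$-th components of the three strategies. Thus a deviation in coordinate $i$ affects only the summand $u^i$, which is exactly what makes the two notions of equilibrium line up coordinatewise.

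For the direction $(\Leftarrow)$, assume $\langle e^i_k\rangle_{k\in\mathbb Z}\in NE(G^i)$ for each $i\le n$. Fix any $k\in\mathbb Z$ and any candidate deviation $\langle s^i\rangle_{i\le n}\in \prod_{i=1}^n S^i$. By the equilibrium assumption on each component game, $u^i(e^i_{k-1},s^i,e^i_{k+1})\le u^i(e^i_{k-1},e^i_k,e^i_{k+1})$ for every $i\le n$. Summing over $i$ and invoking Definition~\ref{product definition}(2) gives
$$u(\langle e^i_{k-1}\rangle_{i\le n},\langle s^i\rangle_{i\le n},\langle e^i_{k+1}\rangle_{i\le n}) \le u(\langle e^i_{k-1}\rangle_{i\le n},\langle e^i_k\rangle_{i\le n},\langle e^i_{k+1}\rangle_{i\le n}),$$
which is precisely the condition for $\langle\langle e^i_k\rangle_{i\le n}\rangle_{k\in\mathbb Z}$ to belong to $NE(\prod_{i=1}^n G^i)$.

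For the direction $(\Rightarrow)$, assume $\langle\langle e^i_k\rangle_{i\le n}\rangle_{k\in\mathbb Z}\in NE(\prod_{i=1}^n G^i)$. Fix an index $j\le n$, a position $k\in \mathbb Z$, and a strategy $s\in S^j$; we need to derive $u^j(e^j_{k-1},s,e^j_{k+1})\le u^j(e^j_{k-1},e^j_k,e^j_{k+1})$. Consider the deviation tuple $\langle s^i\rangle_{i\le n}$ defined by $s^j=s$ and $s^i=e^i_k$ for $i\neq j$, i.e.\ the player deviates only in the $j$-th coordinate. Applying the product equilibrium inequality to this tuple and using Definition~\ref{product definition}(2), all summands with $i\neq j$ cancel because $s^i=e^i_k$, leaving only the desired inequality for the $j$-th summand. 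Hence $\langle e^j_k\rangle_{k\in\mathbb Z}\in NE(G^j)$, and since $j$ was arbitrary, we are done.

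I expect no serious obstacle here; the only point requiring a bit of care is the $(\Rightarrow)$ direction, where one must pick the ``partial deviation'' in exactly one coordinate so that the sum of pay-offs collapses to the single-game inequality. Both directions are thus essentially immediate from the additive decomposition of $u$ provided by Definition~\ref{product definition}.
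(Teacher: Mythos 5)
Your proposal is correct and is essentially the paper's own argument: the paper merely phrases both directions contrapositively, but it uses exactly the same two key steps — summing the coordinatewise inequalities for one direction, and a deviation in a single coordinate (so that the other summands cancel) for the other.
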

\begin{proof}
$(\Rightarrow)$ Suppose that $e^{i_0}=\langle e^{i_0}_k\rangle_{k\in\mathbb Z}\notin NE(G^i)$ for some $i_0\le n$. Thus, there is $k_0\in\mathbb Z$ and $s\in S^{i_0}$ such that
\begin{equation}\label{eq alpha}
u^{i_0}(e^{i_0}_{k_0-1},s,e^{i_0}_{k_0+1}) > u^{i_0}(e^{i_0}_{k_0-1},e^{i_0}_{k_0},e^{i_0}_{k_0+1}).
\end{equation} 
Let tuple $\mathbf s$ be $\langle e^1_{k_0}, e^2_{k_0},\dots,  e^{i_0-1}_{k_0},s,e^{i_0+1}_{k_0},\dots, e^n_{k_0}\rangle$. Then, by Definition~\ref{product definition} and due to inequality~(\ref{eq alpha}),
\begin{eqnarray*}
u(\langle e^{i}_{k_0-1}\rangle_{i\le n},\mathbf s, \langle e^{i}_{k_0+1}\rangle_{i\le n})=
\sum_{i\neq i_0} u^i(e^{i}_{k_0-1}, e^i_{k_0}, e^{i}_{k_0+1}) +
 u^{i_0}(e^{i_0}_{k_0-1}, s, e^{i_0}_{k_0+1}) >\\
 \sum_{i\neq i_0} u^i(e^{i}_{k_0-1}, e^i_{k_0}, e^{i}_{k_0+1})  +
 u^{i_0}(e^{i_0}_{k_0-1},e^{i_0}_{k_0},e^{i_0}_{k_0+1}) = 
  \sum_{i} u^i(e^{i}_{k_0-1}, e^i_{k_0}, e^{i}_{k_0+1}) =\\
 = u(\langle e^{i}_{k_0-1}\rangle_{i\le n},\langle e^{i}_{k_0}\rangle_{i\le n}, \langle e^{i}_{k_0+1}\rangle_{i\le n}),
\end{eqnarray*}
which is a contradiction with the assumption $\langle\langle e^i_k\rangle_{i\le n}\rangle_{k\in \mathbb Z}\in NE(\prod_{i=1}^n G^i)$.

\noindent $(\Leftarrow)$ Assume now that 
$\langle\langle e^i_k\rangle_{i\le n}\rangle_{k\in \mathbb Z}\notin NE(\prod_{i=1}^n G^i)$. Thus, there must exist $k\in \mathbb Z$ and  ${\mathbf s} = \langle s^i\rangle_{i\le n}$ such that
$$
u(\langle e^{i}_{k_0-1}\rangle_{i\le n},\mathbf s, \langle e^{i}_{k_0+1}\rangle_{i\le n}) >
u(\langle e^{i}_{k_0-1}\rangle_{i\le n},\langle e^{i}_{k_0}\rangle_{i\le n}, \langle e^{i}_{k_0+1}\rangle_{i\le n}).
$$
Hence, by Definition~\ref{product definition},
$$
\sum_{i}u^i(e^{i}_{k_0-1},s^i, e^{i}_{k_0+1}) >
\sum_{i}u^i(e^{i}_{k_0-1},e^{i}_{k_0}, e^{i}_{k_0+1}).
$$
Thus, there must exist at least one $i_0\le n$ such that
$$
u^{i_0}(e^{i_0}_{k_0-1},s^{i_0}, e^{i_0}_{k_0+1}) >
u^{i_0}(e^{i_0}_{k_0-1},e^{i_0}_{k_0}, e^{i_0}_{k_0+1}),
$$
which is a contradiction with the assumption 
$\langle e^{i_0}_k\rangle_{k\in\mathbb Z}\in NE(G^{i_0})$.
\end{proof}

\begin{theorem}\label{product theorem}
If $a,b\in\mathbb Z$ and each of the cellular games in family $\{G^i\}_{i=1}^n=\{(S^i,u^i)\}_{i=1}^n$ has at least one Nash equilibrium, then $\prod_{i=1}^nG^i\vDash a\parallel b$ if and only if $G^i\vDash a\parallel b$ for each $i\le n$.
\end{theorem}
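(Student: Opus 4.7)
The plan is to reduce both directions to Lemma~\ref{product ne lemma}, which characterizes the Nash equilibria of a product as exactly those profiles whose coordinate projections are Nash equilibria of the individual factors. The existence hypothesis is only needed for the forward direction, where it lets us pad coordinates we do not care about with arbitrary equilibria.

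For the direction $(\Leftarrow)$, I would assume $G^i\vDash a\parallel b$ for every $i\le n$ and take arbitrary equilibria $E'=\langle\langle e'^i_k\rangle_{i\le n}\rangle_{k\in\mathbb Z}$ and $E''=\langle\langle e''^i_k\rangle_{i\le n}\rangle_{k\in\mathbb Z}$ of $\prod_{i=1}^n G^i$. By Lemma~\ref{product ne lemma}, for each $i\le n$ the projections $\langle e'^i_k\rangle_{k\in\mathbb Z}$ and $\langle e''^i_k\rangle_{k\in\mathbb Z}$ belong to $NE(G^i)$. Applying the assumption $G^i\vDash a\parallel b$ coordinate-by-coordinate yields, for each $i\le n$, an equilibrium $\langle e^i_k\rangle_{k\in\mathbb Z}\in NE(G^i)$ with $e^i_a=e'^i_a$ and $e^i_b=e''^i_b$. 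Re-assembling coordinates gives $\langle\langle e^i_k\rangle_{i\le n}\rangle_{k\in\mathbb Z}$, which by Lemma~\ref{product ne lemma} is an equilibrium of the product and witnesses $\prod_{i=1}^n G^i\vDash a\parallel b$.

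For the direction $(\Rightarrow)$, I would fix any index $i_0\le n$ and show $G^{i_0}\vDash a\parallel b$. Given two equilibria $\langle e'^{i_0}_k\rangle_{k\in\mathbb Z}$ and $\langle e''^{i_0}_k\rangle_{k\in\mathbb Z}$ of $G^{i_0}$, invoke the hypothesis that each $G^i$ has at least one Nash equilibrium to pick $\langle f^i_k\rangle_{k\in\mathbb Z}\in NE(G^i)$ for every $i\neq i_0$. Using Lemma~\ref{product ne lemma} in the easier direction, the profiles obtained by placing $\langle e'^{i_0}_k\rangle_{k\in\mathbb Z}$ (resp.\ $\langle e''^{i_0}_k\rangle_{k\in\mathbb Z}$) at coordinate $i_0$ and $\langle f^i_k\rangle_{k\in\mathbb Z}$ at the remaining coordinates are equilibria of $\prod_{i=1}^n G^i$. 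Applying $\prod_{i=1}^n G^i\vDash a\parallel b$ produces an equilibrium $\langle\langle e^i_k\rangle_{i\le n}\rangle_{k\in\mathbb Z}$ of the product agreeing with the first padded profile at $a$ and with the second at $b$; projecting to coordinate $i_0$ and again invoking Lemma~\ref{product ne lemma} furnishes the required equilibrium of $G^{i_0}$.

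There is no real obstacle here beyond careful bookkeeping; the content of the proof is entirely carried by Lemma~\ref{product ne lemma}. The one point worth flagging is the role of the existence hypothesis: it is used \emph{only} in the $(\Rightarrow)$ direction, to manufacture the padding equilibria in the coordinates other than $i_0$, and without it the implication could fail trivially when some $G^i$ has $NE(G^i)=\emptyset$, since then $NE(\prod_{i=1}^n G^i)=\emptyset$ would make $\prod_{i=1}^n G^i\vDash a\parallel b$ vacuously true.
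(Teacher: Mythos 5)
Your proposal is correct and follows essentially the same route as the paper's own proof: both directions are reduced to Lemma~\ref{product ne lemma}, with the existence hypothesis used only in the forward direction to pad the coordinates other than $i_0$ with arbitrary equilibria before applying interchangeability in the product and projecting back. Your remark on why the existence hypothesis is indispensable (an empty $NE(G^i)$ would make the product's interchangeability vacuously true) is a correct observation that the paper leaves implicit.
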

\begin{proof}
$(\Rightarrow)$ Let 
$\langle e^1_k\rangle_{k\in \mathbb Z}$,
$\langle e^2_k\rangle_{k\in \mathbb Z}$,
\dots,
$\langle e^n_k\rangle_{k\in \mathbb Z}$ be Nash equilibria of games $G^1,G^2,\dots,G^n$ that exist by the assumption of the theorem. Consider any $i_0\le n$ and any two equilibria  $\langle f_k\rangle_{k\in \mathbb Z}$ and $\langle g_k\rangle_{k\in \mathbb Z}$ of the game $G^{i_0}$. We need to show that there exists Nash equilibrium $h=\langle h_k\rangle_{k\in \mathbb Z}$ of the game $G^{i_0}$ such that $h_a=f_a$ and $h_b=g_b$. Indeed, by Theorem~\ref{product ne lemma}, 
$$
\langle\langle e^1_k,e^2_k,\dots,e^{i_0-1}_k,f_k,e^{i_0+1}_k,\dots,e^{n}_k\rangle\rangle_{k\in\mathbb Z}
$$
and
$$
\langle\langle e^1_k,e^2_k,\dots,e^{i_0-1}_k,g_k,e^{i_0+1}_k,\dots,e^{n}_k\rangle\rangle_{k\in\mathbb Z}
$$
are Nash equilibria of the game $\prod_{i=1}^nG^i$. Hence, by the assumption $\prod_{i=1}^nG^i\vDash a\parallel b$, there exists a Nash equilibrium 
$$
\langle\langle w^1_k,w^2_k,\dots,w^{i_0-1}_k,w^{i_0}_k,w^{i_0+1}_k,\dots,w^{n}_k\rangle\rangle_{k\in\mathbb Z}
$$
 of the game $\prod_{i=1}^nG^i$ such that
\begin{equation*}
\langle w^1_a,\dots,w^{i_0-1}_a,w^{i_0}_a,w^{i_0+1}_a,\dots,w^{n}_a\rangle =
\langle e^1_a,\dots,e^{i_0-1}_a,f_a,e^{i_0+1}_a,\dots,e^{n}_a\rangle
\end{equation*}
and
\begin{equation*}
\langle w^1_b,\dots,w^{i_0-1}_b,w^{i_0}_b,w^{i_0+1}_b,\dots,w^{n}_b\rangle =
\langle e^1_b,\dots,e^{i_0-1}_b,f_b,e^{i_0+1}_b,\dots,e^{n}_b\rangle.
\end{equation*}
In particular, $w^{i_0}_a=f_a$ and $w^{i_0}_b=g_b$. At the same time, by Theorem~\ref{product ne lemma}, $\langle w^{i_0}_k\rangle_{k\in\mathbb Z}$ is a Nash equilibrium of game $G^{i_0}$. Let $h$ be $\langle w^{i_0}_k\rangle_{k\in\mathbb Z}$.

\noindent$(\Leftarrow)$ Let 
$\langle\langle f^1_k,f^2_k,\dots,f^n_k\rangle\rangle_{k\in\mathbb Z}$
and
$\langle\langle g^1_k,g^2_k,\dots,g^n_k\rangle\rangle_{k\in\mathbb Z}$
be two Nash equilibria of the game $\prod_{i=1}^nG^i$. We will prove that there is a Nash equilibrium 
$\langle\langle h^1_k,h^2_k,\dots,h^n_k\rangle\rangle_{k\in\mathbb Z}$
of game $\prod_{i=1}^nG^i$ such that
$$
\langle h^1_a,h^2_a,\dots,h^n_a\rangle=\langle f^1_a,f^2_a,\dots,f^n_a\rangle 
$$
and
$$
\langle h^1_b,h^2_b,\dots,h^n_b\rangle=\langle g^1_b,g^2_b,\dots,g^n_b\rangle. 
$$
Indeed, by Theorem~\ref{product ne lemma}, strategy profiles $\langle f^i_k\rangle_{k\in\mathbb Z}$ and $\langle g^i_k\rangle_{k\in\mathbb Z}$ are Nash equilibria of game $G^i$ for each $i\le n$. Thus, due to the assumption of the theorem, for each $i\le n$ there exists Nash equilibrium  $\langle h^i_k\rangle_{k\in\mathbb Z}$ of the game $G^i$ such that $h^i_a=f^i_a$ and $h^i_b=g^i_b$ for each $i\le n$. By Theorem~\ref{product ne lemma}, 
$
\langle\langle h^1_k,h^2_k,\dots,h^n_k\rangle\rangle_{k\in\mathbb Z}
$
is a Nash equilibrium of the game $\prod_{i=1}^nG^i$.
\end{proof}

\subsection{Game $G_\infty$}\label{ginf}

\begin{definition}\label{ginf def}
Game $G_\infty$ is pair $(\{0\},0)$, whose first element is the single-element set containing number $0$ and whose second component is the constant function equal to $0$.
\end{definition}
\begin{lemma}\label{ginf lemma}
$G_\infty\vDash a\parallel b$ for each $a,b\in\mathbb Z$.
\end{lemma}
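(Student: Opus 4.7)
The plan is to exploit the triviality that the strategy set of $G_\infty$ is a singleton. By Definition~\ref{ginf def}, every player has exactly one available strategy, namely $0$, so there is exactly one strategy profile of the game, $\langle 0\rangle_{i\in\mathbb Z}$. Since the pay-off function is identically $0$, the inequality in the definition of Nash equilibrium holds trivially, so this unique profile is a (in fact the) Nash equilibrium of $G_\infty$.

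Given this observation, the argument is immediate. I would fix arbitrary $a,b\in\mathbb Z$ and two Nash equilibria $\langle e'_i\rangle_{i\in\mathbb Z}$ and $\langle e''_i\rangle_{i\in\mathbb Z}$ of $G_\infty$. By the previous paragraph, both profiles must coincide with $\langle 0\rangle_{i\in\mathbb Z}$. Choosing $\langle e_i\rangle_{i\in\mathbb Z} = \langle 0\rangle_{i\in\mathbb Z}$ then gives a Nash equilibrium with $e_a = 0 = e'_a$ and $e_b = 0 = e''_b$, which by Definition~\ref{true} witnesses $G_\infty\vDash a\parallel b$.

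There is no real obstacle here; the only thing worth being slightly careful about is making explicit why the constant profile is in $NE(G_\infty)$ (because the pay-off is constantly $0$, so no deviation can strictly improve it), rather than leaving it implicit. The proof is thus effectively a one-paragraph verification.
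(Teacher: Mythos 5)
Your proof is correct and follows exactly the paper's argument: the singleton strategy set forces a unique strategy profile, which is the unique Nash equilibrium, making the interchangeability condition trivially satisfiable. The paper states this in one sentence; your version merely spells out the details.
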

\begin{proof}
Game $G_\infty$ has a unique strategy profile, which is also the unique Nash equilibrium of this game.
\end{proof}

\subsection{Completeness: final steps}\label{final steps}

We are now ready to state and prove the completeness theorem for our logical system.

\begin{theorem}[completeness]\label{completeness}
For each formula $\phi\in\Phi$, if $G\vDash\phi$ for each cellular game $G$, then $\vdash\phi$.
\end{theorem}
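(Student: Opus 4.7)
The plan is to prove the contrapositive: if $\nvdash\phi$, then some cellular game $G$ has $G\nvDash\phi$. Since our logical system contains all propositional tautologies and Modus Ponens, the set $\{\neg\phi\}$ is consistent and extends to a maximal consistent set $X\subseteq\Phi$. Because the semantic evaluation of $\phi$ is determined by the truth values of its atomic subformulas, it suffices to construct a game $G$ such that $G\vDash a\parallel b$ iff $a\parallel b\in X$ for every atom $a\parallel b$ occurring in $\phi$; an induction on formula structure then yields $G\vDash\phi \Leftrightarrow \phi\in X$, and since $\neg\phi\in X$ we conclude $G\nvDash\phi$.

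The first key step is that membership of an atom in $X$ depends only on $|a-b|$. Symmetry gives $a\parallel b\in X \Leftrightarrow b\parallel a\in X$, and two applications of Homogeneity (once with $c$, once with $-c$) yield $a\parallel b\in X \Leftrightarrow (a+c)\parallel(b+c)\in X$. Define
$$
N\;=\;\{\,n\ge 0 \;:\; \exists\,a,b\in\mathbb Z \text{ with } |a-b|=n \text{ and } a\parallel b\notin X\,\},
$$
so that $a\parallel b\in X$ iff $|a-b|\notin N$. The atomic theory of $X$ is therefore encoded by the single set $N$ of ``forbidden distances''.

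Now split on whether $0\in N$. If $0\notin N$, then $a\parallel a\in X$ for every $a$; Reflexivity then forces $a\parallel b\in X$ for every $a,b$, so $N=\emptyset$, and the game $G_\infty$ of Lemma~\ref{ginf lemma} works. Otherwise $0\in N$. Let $D$ be the maximum of $|a-b|$ over atoms $a\parallel b$ occurring in $\phi$ (taking $D=0$ if no atoms occur) and set $N_\phi = N \cap \{0,1,\dots,D\}$, a finite nonempty set of nonnegative integers. Take $G = \prod_{n\in N_\phi} G_n$. By Theorem~\ref{ne exists}, each factor has a Nash equilibrium, so Theorem~\ref{product theorem} together with Theorem~\ref{Gall} gives
$$
G\vDash a\parallel b \;\Longleftrightarrow\; \forall\,n\in N_\phi\;\bigl(a\ne b \text{ and } |a-b|\ne n\bigr).
$$
Since $0\in N_\phi$, the clause $a\ne b$ is subsumed, and this simplifies to $|a-b|\notin N_\phi$. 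For any atom occurring in $\phi$ we have $|a-b|\le D$, so $|a-b|\notin N_\phi$ is equivalent to $|a-b|\notin N$, i.e.\ to $a\parallel b\in X$, as required.

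The real technical work has already been done in the preceding sections: each $G_n$ had to be engineered so as to refute the single distance $n$ and nothing else (Theorem~\ref{Gall}), and the product operation had to preserve these separate refutations simultaneously (Theorem~\ref{product theorem}). With those pieces in hand, the completeness argument itself is mostly bookkeeping: a maximal consistent extension of $\{\neg\phi\}$, the Symmetry/Homogeneity reduction of its atomic part to a set of forbidden distances, restriction to the finitely many distances relevant to $\phi$, and a two-case choice between $G_\infty$ and a finite product of games $G_n$. The only conceptual subtlety to watch for is ensuring that $G_0$ is included in the product whenever $0\in N$, so that the ``$a\ne b$'' side condition from Theorem~\ref{Gall} matches the membership of the diagonal atoms $a\parallel a$ in $X$.
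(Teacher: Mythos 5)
Your proof is correct and follows essentially the same route as the paper: a maximal consistent set containing $\neg\phi$, a case split on whether $0\parallel 0$ belongs to it, the game $G_\infty$ in the first case, and in the second a finite product of games $G_n$ (indexed by forbidden distances rather than by the offending pairs $(p,q)$ themselves, an inessential difference), followed by the same induction on subformulas. Your observation that including the factor $G_0$ subsumes the $a\neq b$ side condition plays the role of the paper's Homogeneity argument for diagonal atoms, but the substance is identical.
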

\begin{proof}
Suppose that $\nvdash\phi$. Let $M$ be any maximal consistent subset of $\Phi$ such that $\neg\phi\in M$. Thus, $\phi\notin M$ due to assumption of the consistency of $M$. There are two cases that we consider separately.

\noindent{\em Case I}: $0\parallel 0\in M$. 
\begin{lemma}\label{}
$\psi\in M$ if and only if $G_\infty\vDash\psi$ for each $\psi\in\Phi$.
\end{lemma}
\begin{proof}
Induction on structural complexity of formula $\psi$. If $\phi$ is $\bot$, then $\psi\notin M$ due to consistency of set $M$. At the same time, $G_\infty\nvDash\psi$ by Definition~\ref{true}.

Suppose now that $\psi$ is formula $a\parallel b$. By Reflexivity axiom, $0\parallel 0 \to a\parallel b$.  Thus, $a\parallel b\in M$, by the assumption $0\parallel 0\in M$ and due to maximality of set $M$. At the same time, $G_\infty\vDash a\parallel b$ by Lemma~\ref{ginf lemma}.

Finally, case when $\psi$ is an implication $\sigma\to\tau$ follows in the standard way from maximality and consistency of set $M$ and Definition~\ref{true}.
\end{proof}
To finish the proof of the theorem, note that $\phi\notin M$ implies, by the above lemma, that $G_\infty\nvDash\phi$.

\noindent{\em Case II}: $0\parallel 0\notin M$. 
Let $Sub(\phi)$ be the finite set of all $(p,q)\in {\mathbb Z}^2$ such that $p\parallel q$ is a subformula of $\phi$. Define game $G$ to be
$$
\prod \{G_{|p-q|} \;|\; \mbox{$(p, q)\in Sub(\phi)$ and $p\parallel q \notin M$}\}.
$$
\begin{lemma}\label{sub lemma}
For each subformula $\psi$ of formula $\phi$,
$$\psi\in M \mbox{\;\;\;\;\;if and only if\;\;\;\;\;} G\vDash\psi.$$ 
\end{lemma}
\begin{proof}
Induction on structural complexity of formula $\psi$. 
If $\phi$ is $\bot$, then $\psi\notin M$ due to consistency of set $M$. At the same time, $G\nvDash\psi$ by Definition~\ref{true}.

Next assume that $\psi$ is an atomic formula $a\parallel b$. 

\noindent$(\Rightarrow):$ Suppose first that $a\parallel b\in M$ and $G\nvDash a\parallel b$. Thus, by Theorem~\ref{product theorem} and Theorem~\ref{ne exists}, there must exist $(p,q)\in Sub(\phi)$ such that $p\parallel q\notin M$ and $G_{|p-q|}\nvDash a\parallel b$. Hence, by Theorem~\ref{Gall}, either $a=b$ or $|p-q|=|a-b|$. If $a=b$, then assumption $a\parallel b\in M$ implies that $0\parallel 0\in M$ due to Homogeneity axiom and maximality of set $M$. The latter, however is contradiction with the assumption of the case that we consider. If $|p-q|=|a-b|$, then $a\parallel b\in M$ implies $p\parallel q\in M$ by Lemma~\ref{abs value lemma}. The latter is a contradiction with the choice of pair $(p,q)$.

\noindent$(\Leftarrow):$ Suppose that $a\parallel b\notin M$. Thus, $G_{|a-b|}\nvDash a\parallel b$ by Theorem~\ref{Gall}.  Hence, $G\nvDash a\parallel b$ by Theorem~\ref{product theorem} and Theorem~\ref{ne exists}. 

Case when $\psi$ is an implication $\sigma\to\tau$ follows in the standard way from maximality and consistency of set $M$ and Definition~\ref{true}.
\end{proof}
To finish the proof of the theorem, note that $\phi\notin M$ implies, by Lemma~\ref{sub lemma}, that $G\nvDash\phi$.
\end{proof}

\section{Conclusion}\label{conclusion}

In this paper we have shown existence of cellular games in which Nash equilibria are interchangeable for near-by players, but not interchangeable for far-away players. We also gave complete axiomatization of all propositional properties of interchangeability of cellular games. Possible next step in this work could be axiomatization of properties of Nash equilibria for two-dimensional or even circular cellular games. Circular economies has been studies in the economics literature before~\cite{k74jet}.

\bibliography{../sp}

\end{document}